\let\orgsetcounter\setcounter
\newcommand{\exre}{((b?(a+c))^+d)^+e}
\newcommand{\soa}{{SOA}\xspace}
\newcommand{\soas}{{SOAs}\xspace}
\DeclareMathOperator{\occ}{\mathit{occ}}
\newcommand{\mrk}[2]{\ensuremath{{{#1}^{(#2)}}}}
\DeclareMathOperator{\strip}{\mathit{strip}}
\DeclareMathOperator{\pos}{\mathit{pos}}
\DeclareMathOperator{\first}{\mathit{first}}
\DeclareMathOperator{\last}{\mathit{last}}
\DeclareMathOperator{\follow}{\mathit{follow}}
\newcommand{\cure}{\textnormal{\sc rwr}\xspace}
\newcommand{\kcure}{\ensuremath{\textnormal{{\sc rwr}}^2}\xspace}
\DeclareMathOperator{\rep}{\mathit{rep}}
\newcommand{\res}{\text{regular expressions}\xspace}
\newcommand{\oa}{\textnormal{-\textsc{OA}}\xspace}
\newcommand{\koa}{\ensuremath{\mathnormal{k}}\text{-\textsc{OA}}\xspace}
\newcommand{\koas}{\ensuremath{\mathnormal{k}}\textnormal{-\textsc{OA}s}\xspace}
\newcommand{\oas}{\text{-\textsc{OA}s}\xspace}
\newcommand{\ore}{\text{-\textsc{ORE}}\xspace}
\newcommand{\ores}{\text{-\textsc{ORE}s}\xspace}
\newcommand{\kore}{\ensuremath{\mathnormal{k}\text{-\textsc{ORE}}}\xspace}
\newcommand{\kores}{\ensuremath{\mathnormal{k}\text{-\textsc{ORE}s}}\xspace}
\newcommand{\sore}{\text{SORE}\xspace}
\newcommand{\sores}{\text{SOREs}\xspace}
\newcommand{\dtds}{\text{DTDs}\xspace}
\newcommand{\xsd}{\text{XSD}\xspace}
\newcommand{\xsds}{\text{XSDs}\xspace}
\newcommand{\pomm}{\text{POMM}\xspace}
\newcommand{\alphabet}{\ensuremath{\Sigma}}
\newcommand{\emptystr}{\ensuremath{\varepsilon}}
\DeclareMathOperator{\lang}{\mathcal{L}}
\DeclareMathOperator{\con}{.}
\newcommand{\elem}[1]{\ensuremath{\mathtt{#1}}}
\newcommand{\car}[1]{\ensuremath{|#1|}}
\DeclareMathOperator{\lab}{\mathit{lab}}
\newcommand{\prob}[1]{\ensuremath{P[#1]}}
\DeclareMathOperator{\src}{\mathit{src}}
\DeclareMathOperator{\sink}{\mathit{sink}}
\newcommand{\vldb}{\ensuremath{\textnormal{{\sc rwr}}^0}\xspace}
\newcommand{\Corpus}{S}
\newcommand{\Disambiguate}{\textnormal{\sc Disambiguate}\xspace}
\newcommand{\Prune}{\textnormal{\sc Prune}\xspace}
\newcommand{\BaumWelsh}{\textnormal{\sc BaumWelsh}\xspace}
\newcommand{\KoaToKore}{\kcure\xspace}
\newcommand{\ixsd}{\ensuremath{i}\textnormal{\sc XSD}\xspace}
\newcommand{\iKoa}{\ensuremath{i}\textnormal{\sc Koa}\xspace}
\newcommand{\learn}{\textnormal{\ensuremath{i}{\sc DRegEx}}\xspace}
\newcommand{\fixed}{\ensuremath{i\textnormal{{\sc DRegEx}}^{\mathrm{fixed}}}\xspace}
\DeclareMathOperator{\Out}{\Succ}
\DeclareMathOperator{\Pred}{Pred}
\DeclareMathOperator{\Succ}{Succ}
\DeclareMathOperator{\First}{first}
\DeclareMathOperator{\Init}{init}
\DeclareMathOperator{\Best}{best}
\DeclareMathOperator{\Data}{datacost}
\DeclareMathOperator{\Size}{Density}
\newcommand{\myparagraph}[1]{\vspace{2ex} \noindent \textbf{#1.}}
\newcommand{\onlyInOnlineVersion}[1]{}
\newtheorem{theorem}{Theorem}[section]
\newtheorem{proposition}[theorem]{Proposition} 
\title{Learning Deterministic Regular Expressions for the Inference of
  Schemas from XML Data}
\author{GEERT JAN BEX, WOUTER GELADE, FRANK NEVEN\\Hasselt
  University and Transnational University of Limburg 
\and 
STIJN VANSUMMEREN \\
  Universit\'e Libre de Bruxelles}
\begin{abstract} 
  Inferring an appropriate DTD or XML Schema Definition (XSD) for a
  given collection of XML documents essentially reduces to learning
  \emph{deterministic} regular expressions from sets of positive
  example words.  Unfortunately, there is no algorithm capable of
  learning the complete class of deterministic regular expressions
  from positive examples only, as we will show. The regular
  expressions occurring in practical DTDs and XSDs, however, are such
  that every alphabet symbol occurs only a small number of times. As
  such, in practice it suffices to learn the subclass of deterministic
  regular expressions in which each alphabet symbol occurs at most $k$
  times, for some small $k$. We refer to such expressions as
  $k$-occurrence regular expressions ($k\ores$ for short).  Motivated
  by this observation, we provide a probabilistic algorithm that
  learns $k\ores$ for increasing values of $k$, and selects the
  deterministic one that best describes the sample based on a Minimum
  Description Length argument.  The effectiveness of the method is
  empirically validated both on real world and synthetic
  data. Furthermore, the method is shown to be conservative over the
  simpler classes of expressions considered in previous work.
\end{abstract}
\keywords{regular expressions, schema inference, XML}
\begin{document}

{\let\setcounter\orgsetcounter
\begin{bottomstuff}
  A preliminary version of this article appeared in the 17th
  International World Wide Web Conference (WWW 2008).
\end{bottomstuff}
}
\maketitle

\section{Introduction}
\label{sec:introduction}

\newcommand{\xtract}{{\sc xtract}\xspace}
\newcommand{\XTRACT}{{\sc xtract}\xspace}

Recent studies stipulate that schemas accompanying collections of XML
documents are sparse and erroneous in practice.  Indeed,
\citeN{Barb05} and \citeN{Mign03} have shown that approximately half
of the XML documents available on the web do not refer to a schema.
In addition, \citeN{Bex04} and \citeN{Mart06b} have noted that about
two-thirds of XML Schema Definitions (XSDs) gathered from schema
repositories and from the web at large are not valid with respect to
the W3C XML Schema specification~\cite{XSDS01}, rendering them
essentially useless for immedidate application.  A similar observation
was made by \citeN{Sahu00} concerning Document Type Definitions
(DTDs).  Nevertheless, the presence of a schema strongly facilitates
optimization of XML processing (cf., e.g.,
\cite{geerts2005,tamer,shrex,Frei02,Koch04b,kossmann,nevenschwentickicdt03})
and various software development tools such as Castor~\cite{castor}
and SUN's JAXB~\cite{jaxb} rely on schemas as well to perform
object-relational mappings for persistence.  Additionally, the
existence of schemas is imperative when integrating (meta) data
through schema matching~\cite{schemamatching} and in the area of
generic model management~\cite{mmbern}.

Based on the above described benefits of schemas and their
unavailability in practice, it is essential to devise algorithms that
can infer a DTD or XSD for a given collection of XML documents when
none, or no syntactically correct one, is present. This is also
acknowledged by \citeN{Flor05} who emphasizes that in the context of
data integration
\begin{quote}
  \it ``We need to extract good-quality schemas
  automatically from existing data and perform incremental maintenance
  of the generated schemas.''
\end{quote}

As illustrated in Figure~\ref{fig:dtd-store}, a DTD is essentially a
mapping $d$ from element names to regular expressions over element
names. An XML document is valid with respect to the DTD if for every
occurrence of an element name $e$ in the document, the word formed by
its children belongs to the language of the corresponding regular
expression $d(e)$. For instance, the DTD in Figure~\ref{fig:dtd-store}
requires each $\mathtt{store}$ element to have zero or more
$\mathtt{order}$ children, which must be followed by a
$\mathtt{stock}$ element. Likewise, each order must have a
$\mathtt{customer}$ child, which must be followed by one or more
$\mathtt{item}$ elements.

\begin{figure}[t]
  \centering
\[ 
\begin{array}{l}
\texttt{<!}\mathtt{ELEMENT}\ \mathtt{store}\  (\mathtt{order^*,
  stock})\texttt{>}\\
\texttt{<!}\mathtt{ELEMENT}\ \mathtt{order}\  (\mathtt{customer,
  item^+})\texttt{>}\\
\texttt{<!}\mathtt{ELEMENT}\ \mathtt{customer}\  (\mathtt{first, last, email^*})\texttt{>}\\
\texttt{<!}\mathtt{ELEMENT}\ \mathtt{item}\  (\mathtt{id, price +
  (qty, (supplier + item^+))})\texttt{>}\\
\texttt{<!}\mathtt{ELEMENT}\ \mathtt{stock}\  (\mathtt{item^*})\texttt{>}\\
\texttt{<!}\mathtt{ELEMENT}\ \mathtt{supplier}\  (\mathtt{first, last, email^*})\texttt{>}
\end{array}
\]
  \caption{An example DTD.}
  \label{fig:dtd-store}
\end{figure}

To infer a DTD from a corpus of XML documents $\mathcal{C}$ it hence
suffices to look, for each element name $e$ that occurs in a document
in $\mathcal{C}$, at the set of element name words that occur below
$e$ in $\mathcal{C}$, and to infer from this set the corresponding
regular expression $d(e)$. As such, the inference of DTDs reduces to
the inference of regular expressions from sets of positive example
words. To illustrate, from the words $\mathtt{id\ price}$, $\mathtt{id\
  qty\ supplier}$, and $\mathtt{id\ qty\ item\ item}$ appearing under
$\texttt{<}\mathtt{item}\texttt{>}$ elements in a sample XML corpus,
we could derive the rule
\[ \mathtt{item} \to  (\mathtt{id, price +
  (qty, (supplier + item^+))}).
\]
Although XSDs are more expressive than DTDs, and although XSD
inference is therefore more involved than DTD inference, derivation of
regular expressions remains one of the main building blocks on which
XSD inference algorithms are built. In fact, apart from also inferring
atomic data types, systems like Trang~\cite{trang} and
XStruct~\cite{Hege06} simply infer DTDs in XSD syntax. The more recent
$\ixsd$ algorithm~\cite{Bex07} does infer true \xsd schemas by first
deriving a regular expression for every {\em context} in which an
element name appears, where the context is determined by the path from
the root to that element, and subsequently reduces the number of
contexts by merging similar ones.

So, the effectiveness of DTD or XSD schema inference algorithms is
strongly determined by the accuracy of the employed regular expression
inference method. The present article presents a method to reliably
learn regular expressions that are far more complex than the classes
of expressions previously considered in the literature.

\subsection{Problem setting}
\label{sec:problem-setting}

In particular, let $\alphabet$ be a fixed set of alphabet symbols
(also called element names), and let $\alphabet^*$ be the set of all
words over $\alphabet$.

\begin{definition}[(Regular Expressions)]
  Regular expressions are derived by the following grammar.
  \[ r,s ::= \emptyset \mid \emptystr \mid a \mid r\con s \mid r+s
  \mid r? \mid r^+ \] Here, parentheses may be added to avoid
  ambiguity; $\emptystr$ denotes the empty word; $a$ ranges over
  symbols in $\alphabet$; $r\con s$ denotes concatenation; $r + s$
  denotes disjunction; $r^+$ denotes one-or-more repetitions; and $r?$
  denotes the optional regular expression. That is, the language
  $\lang(r)$ accepted by regular expression $r$ is given by:
  \begin{align*}
    \lang(\emptyset) & = \emptyset & 
    \lang(\emptystr) & = \{\emptystr\} \\
    \lang(a) & = \{a\} &
    \lang(r \con s) & = \{ vw \mid v \in \lang(r), w
  \in \lang(s) \} \\ 
  \lang(r + s) & = \lang(r) \cup \lang(s) & 
  \lang(r^+) & = \{ v_1
  \dots v_n \mid n \geq 1 \text{ and } v_1,\dots,v_n \in \lang(r) \}\\
  \lang(r?) & = \lang(r) \cup \{\emptystr\}.
  \end{align*}
\end{definition}

  \vspace{-0.6cm}\hspace{11.7cm} \qed
\medskip

Note that the Kleene star operator (denoting zero or more repititions
as in $r^*$) is not allowed by the above syntax. This is not a
restriction, since $r^*$ can always be represented as $(r^+)?$ or
$(r?)^+$. Conversely, the latter can always be rewritten into the
former for presentation to the user. 

The class of \emph{all} regular expressions is actually too large for
our purposes, as both \dtds and \xsds require the regular expressions
occurring in them to be \emph{deterministic} (also sometimes called
one-unambiguous~\cite{Brue98}).  Intuitively, a regular expression is
deterministic if, without looking ahead in the input word, it allows
to match each symbol of that word uniquely against a position in the
expression when processing the input in one pass from left to right.
For instance, $(a+b)^*a$ is not deterministic as already the first
symbol in the word $aaa$ could be matched by either the first or the
second $a$ in the expression. Without lookahead, it is impossible to
know which one to choose. The equivalent expression $b^*a(b^*a)^*$, on
the other hand, is deterministic. 

\begin{definition}
  \label{def:deterministic}
  Formally, let $\overline{r}$ stand for the regular expression
  obtained from $r$ by replacing the $i$th occurrence of alphabet
  symbol $a$ in $r$ by $\mrk{a}{i}$, for every $i$ and $a$.  For
  example, for $r = b^+ a (b a^+)?$ we have $\overline{r} =
  \mrk{b}{1}^+ \mrk{a}{1}  (\mrk{b}{2} \mrk{a}{2}^+)?$.  A regular
  expression $r$ is \emph{deterministic} if there are no words
  $w\mrk{a}{i}v$ and $w\mrk{a}{j}v'$ in $\lang(\overline{r})$ such
  that~$i \neq j$. \hspace{10.2cm} \qed
\end{definition}
Equivalently, an expression is deterministic if the Glushkov
construction~\cite{Brug93} translates it into a deterministic finite
automaton rather than a non-deterministic one~\cite{Brue98}.  Not
every non-deterministic regular expression is equivalent to a
deterministic one~\cite{Brue98}. Thus, semantically, the class of
deterministic regular expressions forms a strict subclass of the class
of all regular expressions.  

For the purpose of inferring DTDs and XSDs from XML data, we are hence
in search of an algorithm that, given enough sample words of a target
deterministic regular expression $r$, returns a deterministic
expression $r'$ equivalent to $r$. In the framework of \emph{learning
  in the limit}~\cite{Gold67}, such an algorithm is said to learn the
deterministic regular expressions from positive data.

\begin{definition}
  Define a \emph{sample} to be a finite subset of $\alphabet^*$ and
  let $\mathcal{R}$ be a subclass of the regular expressions. An
  algorithm $M$ mapping samples to expressions in $\mathcal{R}$ 
  \emph{learns $\mathcal{R}$ in the limit from positive data}
  if (1) $S \subseteq \lang(M(S))$ for every sample $S$ and (2) to
  every $r \in \mathcal{R}$ we can associate a so-called
  \emph{characteristic sample} $S_r \subseteq \lang(r)$ such that, for
  each sample $S$ with $S_r \subseteq S \subseteq \lang(r)$, $M(S)$ is
  equivalent to $r$. \hspace{1.2cm} \qed
\end{definition}

Intuitively, the first condition says that $M$ must be \emph{sound};
the second that $M$ must be \emph{complete}, given enough data. A
class of regular expressions $\mathcal{R}$ is \emph{learnable in the
  limit from positive data} if an algorithm exists that learns
$\mathcal{R}$.  For the class of all regular expressions, it was shown
by Gold that no such algorithm exists~\cite{Gold67}. We extend this
result to the class of deterministic expressions:
\begin{theorem}
  \label{THM:DREG-NOTLEARNABLE}
  The class of deterministic regular expressions is not learnable in
  the limit from positive data.
\end{theorem}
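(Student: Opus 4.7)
The plan is to adapt Gold's classical argument that any class of languages containing every finite language together with at least one infinite language fails to be identifiable in the limit from positive data. Concretely, I will exhibit an infinite deterministic-regex language together with an infinite family of arbitrarily large finite subsets of it, each of which is itself the language of a deterministic regular expression, and then derive a contradiction from the soundness/completeness conditions in the definition of learnability.

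First I would fix the witness expressions. The expression $r_\infty := a^+$ is trivially deterministic (its alphabet has a single occurrence) and $\lang(r_\infty)=\{a^i : i \geq 1\}$ is infinite. For $n \geq 1$, define the right-nested expressions $r_1 := a$ and $r_{n+1} := a\,r_n?$, so that $r_n$ unfolds to $a(a(\cdots(a?)\cdots ?)?)?$ with $n$ occurrences of $a$, and $\lang(r_n) = L_n := \{a, a^2, \dots, a^n\}$. The marked version is $\mrk{a}{1}(\mrk{a}{2}(\cdots(\mrk{a}{n})?\cdots)?)?$, and a simple induction on $n$ shows that every prefix of every word in $\lang(\overline{r_n})$ has at most one possible continuation marker (the leftmost unmatched position), so by Definition~\ref{def:deterministic} each $r_n$ is deterministic.

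Then I would carry out the Gold-style contradiction. Suppose some algorithm $M$ learns the class of deterministic regular expressions. By completeness, $r_\infty$ has a characteristic sample $S_\infty \subseteq \lang(r_\infty)$; since $S_\infty$ is finite I can pick $n$ with $S_\infty \subseteq L_n$. For the sample $S := L_n$ I then have simultaneously
\[
   S_\infty \;\subseteq\; S \;\subseteq\; \lang(r_\infty)
   \qquad\text{and}\qquad
   S_n \;\subseteq\; S \;\subseteq\; \lang(r_n),
\]
where $S_n \subseteq \lang(r_n) = L_n = S$ is the characteristic sample of $r_n$ guaranteed by completeness. The first inclusion forces $M(S) \equiv r_\infty$, so $a^{n+1} \in \lang(M(S))$; the second forces $M(S) \equiv r_n$, so $a^{n+1} \notin \lang(M(S))$. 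This contradicts the well-definedness of $M(S)$.

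The main obstacle is step one: making sure the finite languages $L_n$ really are expressible by \emph{deterministic} expressions. Naive forms like $a + aa + \cdots + a^n$ or $a(a?)^{n-1}$ fail determinism, because after reading some prefix $a^k$ the next input $a$ can be matched by several marked positions. The right-nested construction $r_n = a\,r_{n-1}?$ circumvents this precisely because, at each level of nesting, only one marked position is available as a legal next symbol. Once this point is settled, the remainder of the argument is a direct application of Gold's scheme to the pair $\bigl(r_\infty, \{r_n\}_{n\geq 1}\bigr)$ and requires nothing specific beyond the definition of learning in the limit.
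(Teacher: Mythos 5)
Your proposal is correct and takes essentially the same route as the paper: the paper also reduces to Gold's superfinite-class argument, citing Gold's Theorem~I.8 directly and then showing that every non-empty finite language is definable by a deterministic expression via a prefix-tree construction. Your right-nested expressions $a(a(\cdots(a)?\cdots)?)?$ are exactly that prefix-tree construction specialized to the unary languages $\{a,\dots,a^n\}$, and your explicit diagonalization merely unfolds the proof of Gold's theorem that the paper invokes as a black box; you correctly identify the one genuinely delicate point, namely that the naive expressions $a+aa+\cdots+a^n$ and $a(a?)^{n-1}$ are not deterministic.
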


\begin{proof}
  It was shown by \citeN[Theorem I.8]{Gold67}, that any class of
  regular expressions that contains all non-empty finite languages as
  well as at least one infinite language is not learnable in the limit
  from positive data.  Since deterministic regular expressions like
  $a^*$ define an infinite language, it suffices to show that every
  non-empty finite language is definable by a deterministic
  expression. Hereto, let $S$ be a finite, non-empty set of words. Now
  consider the prefix tree $T$ for $S$. For example, if $S = \{a, aab,
  abc, aac\}$, we have the following prefix tree:
\begin{center}
\includegraphics[viewport=136 604 204 675]{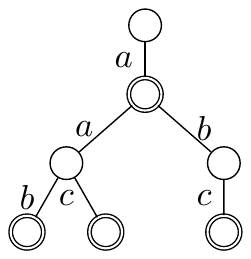}
\end{center}
Nodes for which the path from the root to that node forms a word in
$S$ are marked by double circles. In particular, all leaf nodes are
marked.

By viewing the internal nodes in $T$ with two or more children as
disjunctions; internal nodes in $T$ with one child as conjunctions;
and adding a question mark for every marked internal node in $T$, it
is straightforward to transform $T$ into a regular expression. For
example, with $S$ and $T$ as above we get $r = a \con(b\con c + a
\con(b +c))?$. Clearly, $\lang(r) = S$. Moreover, since no node in $T$
has two edges with the same label, $r$ must be deterministic.
\end{proof}

Theorem~\ref{THM:DREG-NOTLEARNABLE} immediately excludes the
possibility for an algorithm to infer the full class of \dtds or
\xsds. In practice, however, regular expressions occurring in \dtds
and \xsds are \emph{concise} rather than arbitrarily complex. Indeed,
a study of $819$ DTDs and XSDs gathered from the Cover
Pages~\cite{Cover03} (including many high-quality XML standards) as
well as from the web at large, reveals that \res occurring in
practical schemas are such that every alphabet symbol occurs only a
small number of times~\cite{Mart06b}. In practice, therefore, it
suffices to learn the subclass of deterministic regular expressions in
which each alphabet symbol occurs at most $k$ times, for some small
$k$. We refer to such expressions as \emph{$k$-occurrence regular
  expressions}.
\begin{definition}
  A regular expression is \emph{$k$-occurrence} if every alphabet
  symbol occurs at most $k$ times in it. \hspace{7.5cm} \qed
\end{definition}
For example, the expressions $\elem{customer}\con \elem{order}^+$ and
$(\elem{school} + \elem{institute})^+$ are both $1$-occurrence, while
$\elem{id}\con(\elem{qty} + \elem{id})$ is $2$-occurrence (as
$\elem{id}$ occurs twice). Observe that if $r$ is $k$-occurrence, then
it is also $l$-occurrence for every $l \geq k$. To simplify notation
in what follows, we abbreviate `$k$-occurrence regular expression' by
\kore and also refer to the $1$\ores as `single occurrence regular
expressions' or \sores.

\subsection{Outline and Contributions}
\label{sec:outl-contr}

Actually, the above mentioned examination shows that in the majority
of the cases $k=1$. Motivated by that observation, we have studied and
suggested practical learning algorithms for the class of deterministic
\sores in a companion article \cite{Bex06}. These algorithms, however,
can only output \sores even when the target regular expression is
not. In that case they always return an approximation of the target
expressions. It is therefore desirable to also have learning
algorithms for the class of deterministic $\kores$ with $k \geq 2$.
Furthermore, since the exact $k$-value for the target expression,
although small, is unknown in a schema inference setting, we also
require an algorithm capable of determining the best value of $k$
automatically.

We begin our study of this problem in Section~\ref{sec:basic-results} by
showing that, for each fixed $k$, the class of deterministic $\kores$
\emph{is} learnable in the limit from positive examples only.  We also
argue, however, that this theoretical algorithm is unlikely to work
well in practice as it does not provide a method to automatically
determine the best value of $k$ and needs samples whose size can be
exponential in the size of the alphabet to successfully learn some
target expressions.

In view of these observations, we provide in
Section~\ref{sec:mach-learn-appr} the practical algorithm $\learn$.
Given a sample of words $S$, $\learn$ derives corresponding
deterministic \kores for increasing values of $k$ and selects from
these candidate expressions the expression that describes $S$ best. To
determine the ``best'' expression we propose two measures: (1) a
Language Size measure and (2) a Minimum Description Length measure
based on the work of Adriaans and Vit\'anyi~\citeyear{Adri06a}.  The
main technical contribution lies in the subroutine used to derive the
actual $\kores$ for $S$. Indeed, while for the special case where
$k=1$ one can derive a $\kore$ by first learning an automaton $A$ for
$S$ using the inference algorithm of \citeN{Garc90}, and by
subsequently translating $A$ into a $1\ore$ (as shown in
\cite{Bex06}), this approach does not work when $k \geq 2$. In
particular, the algorithm of Garcia and Vidal only works when learning
languages that are ``$n$-testable'' for some fixed natural number
$n$~\cite{Garc90}.  Although every language definable by a $1\ore$ is
$2$-testable~\cite{Bex06}, there are languages definable by a $2\ore$,
for instance $a^*ba^*$, that are not $n$-testable for any $n$. We
therefore use a probabilistic method based on Hidden Markov Models to
learn an automaton for $S$, which is subsequently translated into a
$\kore$.

The effectiveness of $\learn$ is empirically validated in
Section~\ref{sec:experiments} both on real world and synthetic data.
We compare the results of $\learn$ with those of the algorithm
presented in previous work~\cite{Bex08}, to which we refer as
$\learn(\vldb)$.

\section{Related Work}
\label{sec:related-work}

\vspace{-1ex}
\myparagraph{Semi-structured data} In the context of semi-structured
data, the inference of schemas as defined in~\cite{unstructured,lore}
has been extensively studied~\cite{dataguides,nestorov}. No methods
were provided to translate the inferred types to regular expressions,
however.

\myparagraph{DTD and XSD inference} In the context of DTD inference,
\citeN{Bex06} gave in earlier work two inference algorithms: one for
learning $1\ores$ and one for learning the subclass of $1\ores$ known
as \emph{chain regular expressions}.  The latter class can also be
learned using Trang~\cite{trang}, state of the art software written by
James Clark that is primarily intended as a translator between the
schema languages DTD, Relax NG~\cite{RELAXNG01}, and XSD, but also
infers a schema for a set of XML documents. In contrast, our goal in
this article is to infer the more general class of deterministic
expressions.  \xtract~\cite{Garo03} is another regular expression
learning system with similar goals. We note that \xtract also uses the
Minimum Description Length principle to choose the best expression
from a set of candidates.

Other relevant DTD inference research is \cite{wongDTD} and
\cite{DBLP:conf/krdb/Chidlovskii01} that learn finite automata but do
not consider the translation to deterministic regular expressions.
Also, in \cite{Youn00} a method is proposed to infer DTDs through
stochastic grammars where right-hand sides of rules are represented by
probabilistic automata.  No method is provided to transform these into
regular expressions. Although \citeN{ahonen} proposes such a
translation, the effectiveness of her algorithm is only illustrated by
a single case study of a dictionary example; no experimental study is
provided.

Also relevant are the XSD inference systems~\cite{Bex07,trang,Hege06}
that, as already mentioned, rely on the same methods for learning
regular expressions as DTD inference.

\myparagraph{\bf Regular expression inference} Most of the learning of
regular languages from positive examples in the computational learning
community is directed towards inference of automata as opposed to
inference of \res~\cite{angluinsmith,Pitt89,Saka97}. However,
these approaches learn strict subclasses of the regular languages
which are incomparable to the subclasses considered here. Some
approaches to inference of regular expressions for restricted cases
have been considered. For instance, \cite{Braz93} showed
that regular expressions without union can be approximately learned in
polynomial time from a set of examples satisfying some
criteria. \cite{fernaualt} provided a learning algorithm for
regular expressions that are finite unions of pairwise left-aligned
union-free regular expressions.  The development is purely
theoretical, no experimental validation has been performed.

\myparagraph{\bf HMM learning} Although there has been work on Hidden
Markov Model structure induction~\cite{Rabi89,Frei00}, the requirement
in our setting that the resulting automaton is deterministic is, to the
best of our knowledge, unique.



\section{Basic results}
\label{sec:basic-results}

In this section we establish that, in contrast to the class of all
deterministic expressions, the subclass of deterministic \kores
\emph{can} theoretically be learned in the limit from positive
data, for each fixed $k$. We also argue, however, that this
theoretical algorithm is unlikely to work well in practice.

Let $\alphabet(r)$ denote the set of alphabet symbols that occur in a
regular expression $r$, and let $\alphabet(S)$ be similarly defined
for a sample $S$. Define the \emph{length} of a regular expression $r$
as the length of it string representation, including operators and
parenthesis. For example, the length of $(a \con b)^+? + c$ is $9$.

\begin{theorem}
  \label{THM:KORE-LEARNABLE}
  For every $k$ there exists an algorithm $M$ that learns the class of
  deterministic $\kores$ from positive data. Furthermore, on input
  $S$, $M$ runs in time polynomial in the size of $S$, yet exponential
  in $k$ and $|\alphabet(S)|$.
\end{theorem}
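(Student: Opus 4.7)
The plan is to give an enumerate-and-test algorithm that exploits two structural facts about the class $\mathcal{R}_k$ of deterministic $\kores$. First, for fixed $k$ and a fixed finite alphabet, only finitely many deterministic $\kores$ exist over that alphabet. Second, for deterministic regular expressions both membership and language inclusion are decidable in polynomial time, since the Glushkov construction yields a DFA whose size is linear in the expression.

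On input $S$ with $n=|\alphabet(S)|$, I would restrict attention to expressions whose alphabet is contained in $\alphabet(S)$, since any other symbol is irrelevant for accepting $S$. Every such candidate has at most $kn$ alphabet occurrences and therefore length $O(kn)$ in a normalised syntax, so the set $\mathcal{E}$ of deterministic $\kores$ over $\alphabet(S)$ has cardinality $2^{O(kn)}$ and can be enumerated in time exponential in $k$ and $n$, with syntactic validity, the $k$-occurrence condition, and determinism (a routine test on the Glushkov automaton) verified per candidate in polynomial time. The algorithm then keeps the subset $\mathcal{C}\subseteq\mathcal{E}$ of expressions consistent with $S$, i.e.\ those $r$ with $S\subseteq\lang(r)$, computes the pairwise language inclusion relation on $\mathcal{C}$ via the product construction on the associated DFAs, and outputs any $r\in\mathcal{C}$ whose language is minimal under $\subseteq$.

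The main obstacle, and the heart of the correctness argument, is exhibiting a characteristic sample that forces this selection rule to return a correct answer. For a target $r^*\in\mathcal{R}_k$ and every $r'\in\mathcal{E}$ with $\lang(r')\neq\lang(r^*)$ and $\lang(r^*)\not\subseteq\lang(r')$, I would pick one witness $w_{r'}\in\lang(r^*)\setminus\lang(r')$; let $S_{r^*}\subseteq\lang(r^*)$ be the finite collection of all such witnesses. Whenever $S_{r^*}\subseteq S\subseteq\lang(r^*)$, no $r'$ with $\lang(r^*)\not\subseteq\lang(r')$ can lie in $\mathcal{C}$, because $w_{r'}\in S\setminus\lang(r')$ would refute consistency. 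Hence every $r'\in\mathcal{C}$ satisfies $\lang(r^*)\subseteq\lang(r')$, so $\lang(r^*)$ is the unique minimum of the inclusion order on $\mathcal{C}$ and any minimal element returned by $M$ must be equivalent to $r^*$. Minimising \emph{language} rather than syntax is precisely the point here: positive data alone cannot distinguish $r^*$ from any strictly more permissive $r'$, so the tie has to be broken by language containment, which is decidable because both class and inclusion test are effective. The overall running time is $|\mathcal{E}|$ times a polynomial in $|S|$ and $kn$, matching the stated bound.
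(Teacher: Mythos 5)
Your proposal is correct and follows essentially the same route as the paper: bound candidates to length $O(k|\alphabet(S)|)$ via a normalised syntax, enumerate them, test determinism and consistency with $S$, return a consistent candidate whose language is inclusion-minimal, and build the characteristic sample from witnesses in $\lang(r^*)$ against every candidate whose language does not contain $\lang(r^*)$ (the paper implements the same mechanism by linearising the inclusion order and returning the first consistent candidate). The one step you assert but the paper actually proves is that every deterministic \kore is equivalent to a deterministic \kore of length $O(k|\alphabet(S)|)$; the paper establishes this with an explicit rewrite system eliminating redundant parentheses, $\emptystr$, $\emptyset$ and stacked unary operators while preserving determinism and the $k$-occurrence bound, which is the main technical content of its argument.
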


\begin{proof}
  The algorithm $M$ is based on the following observations. First
  observe that every deterministic \kore $r$ over a finite alphabet
  $A\subseteq \alphabet$ can be simplified into an equivalent
  deterministic \kore $r'$ of length at most $10k|A|$ by rewriting $r$
  according to the following system of rewrite rules until no more
  rule is applicable:
    \[  \begin{array}{rcl@{\qquad}rcl}
      ((s)) & \to & (s) & s?^+ & \to & s^+? \\
    s?? & \to & s? & 
    s^{++} & \to & s^+ \\
    s + \emptystr
    & \to & s? & 
    \emptystr + s & \to & s? \\ 
    s \con \emptystr & \to & s &
    \emptystr \con s & \to & s \\
    \emptystr? & \to & \emptystr  &
    \emptystr^+ & \to & \emptystr \\
    s + \emptyset & \to & s &
    \emptyset + s & \to  & s \\
    s \con \emptyset & \to & \emptyset &
    \emptyset \con s & \to & \emptyset \\
    \emptyset? & \to & \emptyset &
    \emptyset^+ &\to & \emptyset
  \end{array}\]
  (The first rewrite rule removes redundant parenthesis in $r$.)
  Indeed, since each rewrite rule clearly preserves determinism and language
  equivalence, $r'$ must be a deterministic expression equivalent to
  $r$. Moreover, since none of the rewrite rules
  duplicates a subexpression and since $r$ is a \kore, so is $r'$. Now
  note that, since no rewrite rule applies to it, $r'$ is either $\emptyset$, $\emptystr$, or generated by
  the following grammar
  \[
  \begin{array}{rcl}
    t & ::= & a \mid a? \mid a^+ \mid a^+? \mid (a) \mid (a)? \mid
    (a)^+ \mid (a)^+? \\ & \mid & t_1 \con t_2 \mid (t_1
    \con t_2) \mid (t_1 \con t_2)? \mid (t_1 \con t_2)^+ \mid
    (t_1 \con t_2)^+? \\
    & \mid & t_1 + t_2 \mid (t_1 + t_2) \mid (t_1 + t_2)? \mid (t_1 + t_2)^+
    \mid (t_1 + t_2)^+? 
  \end{array}\]    
  It  is not difficult to verify by structural induction that any expression $t$
  produced by this grammar has length \[ |t| \leq - 4 + 10 \sum_{a \in
    \alphabet(t)} \rep(t, a),\] where $\rep(t,a)$ denotes the  number of
  times  alphabet symbol $a$ occurs in $t$. For instance, $\rep(b\con
  (b+c), a) = 0$ and $\rep(b \con (b+c), b) = 2$. Since
  $\rep(r', a) \leq k$ for every $a \in \alphabet(r')$, it readily follows
  that $|r'| \leq
  10k|A| - 4 \leq 10k|A|$. 

  Then observe that all possible regular expressions over $A$ of
  length at most $10k|A|$ can be enumerated in time exponential in
  $k|A|$.  Since checking whether a regular expression is
  deterministic is decidable in polynomial time~\cite{Brue98}; and
  since equivalence of deterministic expressions is decidable in
  polynomial time ~\cite{Brue98}, it follows by the above observations
  that for each $k$ and each finite alphabet $A \subseteq \alphabet$
  it is possible to compute in time exponential in $k|A|$ a finite set
  $\mathcal{R}_A$ of pairwise non-equivalent deterministic \kores over
  $A$ such that
  \begin{itemize}
  \item every $r \in \mathcal{R}_A$ is of size at most $10 k |A|$; and
  \item for every deterministic \kore $r$ over $A$ there exists an
    equivalent expression $r' \in \mathcal{R}_A$.
  \end{itemize}
  (Note that since $\mathcal{R}_A$ is computable in time exponential
  in $k |A|$, it has at most an exponential number of elements in
  $k|A|$.)  Now fix, for each finite $A \subseteq \alphabet$ an
  arbitrary order $\prec$ on $\mathcal{R}_A$, subject to the provision
  that $r \prec s$ only if $\lang(s) - \lang(r) \not = \emptyset$.
  Such an order always exists since $\mathcal{R}_A$ does not contain
  equivalent expressions.

  Then let $M$ be the algorithm that, upon sample $S$, computes
  $\mathcal{R}_{\alphabet(S)}$ and outputs the first (according to $\prec$)
  expression $r \in \mathcal{R}_{\alphabet(S)}$ for which $S \subseteq
  L(r)$.  Since $\mathcal{R}_{\alphabet(S)}$ can be computed in time exponential
  in $k|\alphabet(S)|$; since there are at most an exponential number
  of expressions in $\mathcal{R}_{\alphabet(S)}$; since each
  expression $r \in \mathcal{R}_{\alphabet(S)}$ has size at most $10 k
  |\alphabet(S)|$; and since checking membership in $\lang(r)$ of a
  single word $w \in S$ can be done in time polynomial in the size of
  $w$ and $r$, it follows that $M$ runs in time polynomial in $S$ and
  exponential in $k|\alphabet(S)|$.

  Furthermore, we claim that $M$ learns the class of deterministic
  \kores. Clearly, $S \subseteq \lang(M(S))$ by definition. Hence, it
  remains to show completeness, i.e., that we can associate to each
  deterministic \kore $r$ a sample $S_r \subseteq L(r)$ such that, for
  each sample $S$ with $S_r \subseteq S \subseteq L(r)$, $M(S)$ is
  equivalent to $r$.  Note that, by definition of
  $\mathcal{R}_{\alphabet(r)}$, there exists a deterministic \kore $r'
  \in \mathcal{R}_{\alphabet(r)}$ equivalent to $r$. Initialize
  $S_{r}$ to an arbitrary finite subset of $\lang(r) = \lang(r')$ such
  that each alphabet symbol of $r$ occurs at least once in $S$, i.e.,
  $\alphabet(S_{r}) = \alphabet(r)$. Let $r_1 \prec \dots \prec r_n$
  be all predecessors of $r'$ in $\mathcal{R}_{\alphabet(r)}$
  according to $\prec$.  By definition of $\prec$, there exists a word
  $w_i \in \lang(r) - \lang(r_i)$ for every $1 \leq i \leq n$. Add all
  of these words to $S_r$.  Then clearly, for every sample $S$ with
  $S_r \subseteq S \subseteq \lang(r)$ we have $\alphabet(S) =
  \alphabet(r)$ and $S \not \subseteq L(r_i)$ for every $1 \leq i \leq
  n$.  Since $M(S)$ is the first expression in
  $\mathcal{R}_{\alphabet(r)}$ with $S \subseteq L(r)$, we hence have $M(S)
  = r' \equiv r$, as desired.
\end{proof}

While Theorem~\ref{THM:KORE-LEARNABLE} shows that the class of
deterministic \kores is better suited for learning from positive data
than the complete class of deterministic expressions, it does not
provide a useful practical algorithm, for the following reasons.

\begin{enumerate}
\item First and foremost, $M$ runs in time exponential in the size of
  the alphabet $\alphabet(S)$, which may be problematic for the
  inference of schema's with many element names.
\item Second, while Theorem~\ref{THM:KORE-LEARNABLE} shows that the
  class of deterministic \kores is learnable in the limit for each
  fixed $k$, the schema inference setting is such that we do not know
  $k$ a priori. If we overestimate $k$ then $M(S)$ risks being an
  under-approximation of the target expression $r$, especially when
  $S$ is incomplete. To illustrate, consider the $1\ore$ target
  expression $r = a^+b^+$ and sample $S = \{ab, abbb, aabb\}$.  If we
  overestimate $k$ to, say, $2$ instead of $1$, then $M$ is free to
  output $a a? b^+$ as a sound answer. On the other hand, if we
  underestimate $k$ then $M(S)$ risks being an over-approximation of
  $r$.  Consider, for instance, the $2\ore$ target expression $r =
  aa?b^+$ and the same sample $S = \{ab, abbb, \allowbreak aabb\}$. If
  we underestimate $k$ to be $1$ instead of $2$, then $M$ can only
  output $1\ores$, and needs to output at least $a^+b^+$ in order to
  be sound. In summary: we need a method to determine the most
  suitable value of $k$.
\item Third, the notion of learning in the limit is a very liberal
  one: correct expressions need only be derived when sufficient data
  is provided, i.e., when the input sample is a superset of the
  characteristic sample for the target expression $r$.  The following
  theorem shows that there are reasonably simple expressions $r$ such
  that characteristic sample $S_r$ of any sound and complete learning
  algorithm is at least exponential in the size of $r$. As such, it is
  unlikely for any sound and complete learning algorithm to behave
  well on real-world samples, which are typically incomplete and hence
  unlikely to contain all words of the characteristic sample.
\end{enumerate}

\begin{theorem}\label{THM:KORE-EXP-DATA}
  Let $A = \{a_1,\dots,a_n\} \subseteq \Sigma$ consist of $n$ distinct
  element names. Let $r_1 = (a_1 a_2 + a_3 + \dots + a_n)^+$, and let
  $r_2 = (a_2 + \dots + a_n)^+ a_1 (a_2 + \dots + a_n)^+$. For any
  algorithm that learns the class of deterministic $(2n + 3)\ores$ and
  any sample $S$ that is characteristic for $r_1$ or $r_2$ we have
  $\car{S} \geq \sum_{i=1}^{n}(n-2)^i$.
\end{theorem}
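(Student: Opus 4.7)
The plan is to prove the lower bound by exhibiting, for each of the two target expressions, a large family of alternative deterministic $(2n+3)\ores$ whose languages are strictly contained in the target, each forcing its own distinguishing witness word into any characteristic sample.

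The cornerstone is an elementary observation about the characteristic-sample map $r \mapsto S_r$ of any algorithm $M$ that learns the class $\mathcal{R}$ of deterministic $(2n+3)\ores$: whenever $r, r' \in \mathcal{R}$ satisfy $\lang(r') \subsetneq \lang(r)$, the sample $S_r$ must contain a word in $\lang(r) \setminus \lang(r')$. Otherwise $S := S_r \cup S_{r'} \subseteq \lang(r) \cap \lang(r')$ would contain both $S_r$ and $S_{r'}$, forcing $M(S)$ to be equivalent simultaneously to $r$ and to $r'$, which is impossible since $r \not\equiv r'$.

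For $r_1$, I would parameterize the alternatives by the set $U$ of non-empty words of length at most $n$ over $\{a_3,\ldots,a_n\}$, which has size $|U| = \sum_{i=1}^n (n-2)^i$. For each $u \in U$ I would construct a deterministic $(2n+3)\ore$ $r'_u$ whose language is $\lang(r_1) \setminus \{w \in \lang(r_1) : w \text{ begins with } u \cdot a_1 a_2\}$, and use the witness $w_u := u \cdot a_1 a_2 \in \lang(r_1) \setminus \lang(r'_u)$. By the cornerstone, $S_{r_1}$ must contain at least one word of $\lang(r_1)$ beginning with $u \cdot a_1 a_2$ for every $u \in U$. Crucially, these $|U|$ sets of words are pairwise disjoint: a single word $w$ beginning with both $u \cdot a_1 a_2$ and $u' \cdot a_1 a_2$ with $|u|<|u'|$ would force the $(|u|+1)$-th letter of $w$ to be simultaneously $a_1$ and the $(|u|+1)$-th letter of $u'$, which lies in $\{a_3,\ldots,a_n\}$. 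This yields $|S_{r_1}| \geq \sum_{i=1}^n (n-2)^i$. The argument for $r_2$ runs in parallel, with $r''_u$ excluding exactly the words of $\lang(r_2)$ whose (uniquely determined) prefix before the lone occurrence of $a_1$ equals $u$; uniqueness of that prefix makes the excluded sets pairwise disjoint across $u \in U$, giving the same lower bound.

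The main obstacle will be constructing $r'_u$ and $r''_u$ as deterministic expressions within the $(2n+3)$-occurrence budget. My plan is a recursive ``prefix-check with escape'' gadget: at level $j$ the expression either reads the next expected block of $u \cdot a_1 a_2$ (for $r_1$) or the next expected letter of $u$ (for $r_2$) and recurses into level $j+1$, or reads some other block/letter and falls into an unrestricted copy of $\lang(r_1)$ (resp.\ of the remaining part of $\lang(r_2)$). A straightforward induction on the level then shows that every alphabet symbol occurs at most $2|u|+O(1) \leq 2n+3$ times, which is precisely why the statement uses the bound $2n+3$; determinism is inherited because at every disjunction the branches begin with pairwise distinct initial symbols. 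Verifying this size-and-determinism bookkeeping is the only delicate step of the argument.
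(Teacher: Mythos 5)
Your proposal is correct and follows essentially the same route as the paper: the ``cornerstone'' observation about characteristic samples of languages related by strict inclusion is precisely the paper's key argument (via $S_{r_1}\cup S_{r_w}$), the indexing set $U$ coincides with the paper's set $W$ of non-empty words over $\{a_3,\dots,a_n\}$ of length at most $n$, and your recursive prefix-check-with-escape gadget is the paper's construction of $r_w$ (the paper excises the single word $u$ itself for $r_1$, and the words ending in $a_1u$ for $r_2$, which makes the disjointness of the forced witnesses immediate, but your variant excluding the cone of words beginning with $u\con a_1a_2$, respectively with prefix $u$ before the unique $a_1$, works just as well given your disjointness argument). The one step you defer --- verifying that each $r'_u$ fits in the $(2n+3)$-occurrence budget and is deterministic --- indeed checks out exactly as you predict: each symbol occurs at most twice per level of the gadget plus a constant, giving at most $2n+1$ occurrences.
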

\begin{proof}
  First consider $r_1 = (a_1a_2 + a_3 + \dots + a_n)^+$.  Observe that
  there exist an exponential number of deterministic $(2n +3)\ores$
  that differ from $r_1$ in only a single word. Indeed, let $B = A -
  \{a_1,a_2\}$ and let $W$ consist of all non-empty words $w$ over $B$
  of length at most $n$. Define, for every word $w = b_1\dots b_m \in
  W$ the deterministic $(2n + 3)\ore$ $r_w$ such that $\lang(r_w) =
  \lang(r_1) - \{w\}$ as follows. First, define, for every $1 \leq i
  \leq m$ the deterministic $2\ore$ $r_w^i$ that accepts all words in
  $\lang(r_1)$ that do not start with $b_i$:
  \[ r_w^i := (a_1 a_2 + (B - \{b_i\})) \con (a_1 a_2 + a_3 + \dots + a_n)^* \]
  Clearly, $v \in \lang(r_1) - \{w\}$ if, and only if, $v \in
  \lang(r_1)$ and there is some
  $0 \leq i \leq m$ such that $v$ agrees with $w$ on the first $i$
  letters, but differs in the $(i+1)$-th letter. Hence, it suffices to take
  \[ r_w := r_w^1 + b_1(\emptystr + r_w^2 + b_2(\emptystr + r_w^3 +
  b_3( \dots + b_{m-1}(\emptystr + r_w^{m} + b_m \con r_1) \dots
  ))) \] Now assume that algorithm $M$ learns the class of
  deterministic $(2n+3)\ores$ and suppose that $S_{r_1}$ is
  characteristic for $r_1$. In particular, $S_{r_1} \subseteq
  \lang(r_1)$. By definition, $M(S)$ is equivalent to $r$ for every
  sample $S$ with $S_{r_1} \subseteq S \subseteq \lang(r_1)$. We claim
  that in order for $M$ to have this property, $W$ must be a subset of
  $S_r$. Then, since $W$ contains all words over $B$ of length at most
  $n$, $|S_{r_1}| \geq \sum_{i=1}^{n}(n-2)^i$, as desired. The intuitive
  argument why $W$ must be a subset of $S_r$ is that if there exists
  $w$ in $W - S_r$, then $M$ cannot distinguish between $r_1$ and
  $r_w$. Indeed, suppose for the purpose of contradiction that there
  is some $w \in W$ with $w \not \in S_{r_1}$. Then $S_{r_1}$ is a
  subset of $\lang(r_w)$. Indeed, $S_{r_1} = S_{r_1} - \{w\} \subseteq
  \lang(r_1) - \{w\} = \lang(r_w)$. Furthermore, since $M$ learns the
  class of deterministic $(2n+3)\ores$, there must be some
  characteristic sample $S_{r_w}$ for $r_w$. Now, consider the sample
  $S_{r_1} \cup S_{r_w}$. It is included in both $\lang(r_1)$ and
  $\lang(r_w)$ and is a superset of both $S_{r_1}$ and $S_{r_w}$. But
  then, by definition of characteristic samples, $M(S_{r_1} \cup
  S_{r_w})$ must be equivalent to both $r_1$ and $r_w$. This is
  absurd, however, since $\lang(r_1) \not = \lang(r_w)$ by
  construction.

  A similar argument shows that the characteristic sample $S_{r_2}$ of
  $r_2 = (a_2 + \dots + a_n)^+a_1 (a_2 + \dots + a_n)^+$ also requires
  $\sum_{i=1}^{n}(n-2)^i$ elements. In this case, we take $B = A -
  \{a_1\}$ and we take $W$ to be the set of all non-empty words over
  $B$ of length at most $n$. For each $w = b_1 \dots b_m \in W$,
  we construct the deterministic $(2n+3)\ore$ $r_w$ such that
  $\lang(r_w)$ accepts all words in $\lang(r)$ that do not end with
  $a_1w$, as follows. Let, for $1 \leq i \leq m$, $r_w^i$ be the
  $2\ore$ that accepts all words in $B^+$ that do not start with
  $b_i$:
  \[ r_w^i := (B - \{b_i\}) \con B^* \]
  Then it suffices to take 
  \[r_w := B^+ a_1 (r_w^i + b_1(\emptystr +
  r_w^2 + b_3( \dots + b_{m-1}(\emptystr + r_w^m + b_mB^+) \dots ))).
 \]
 A similar argument as for $r_1$ then shows that the characteristic
 sample $S_{r_2}$ of $r_2$ needs to contain, for each $w \in W$, at
 least one word of the form $va_1w$ with $v \in B^+$. Therefore,
 $|S_{r_2}| \geq \sum_{i=1}^{n}(n-2)^i$, as desired.
\end{proof}


\section{The Learning  Algorithm}
\label{sec:mach-learn-appr}

In view of the observations made in Section~\ref{sec:basic-results},
we present in this section a practical learning algorithm that (1)
works well on incomplete data and (2) automatically determines the
best value of $k$ (see Section~\ref{sec:experiments} for an experimental
evaluation). Specifically, given a sample $S$, the algorithm derives
deterministic $\kores$ for increasing values of $k$ and selects from
these candidate expressions the \kore that describes $S$ best. To
determine the ``best'' expression we propose two measures: (1) a
Language Size measure and (2) a Minimum Description Length measure
based on the work of Adriaans and Vit\'anyi~\citeyear{Adri06a}.

Our algorithm does not derive deterministic $\kores$ for $S$ directly,
but uses, for each fixed $k$, a probabilistic method to first learn an
automaton for $S$, which is subsequently translated into a
$\kore$. The following section (Section \ref{sec:learn-koas-prob})
explains how the probabilistic method that learns an automaton from
$S$ works. Section~\ref{sec:transl-autom-into-kores} explains how the
learned automaton is translated into a \kore. Finally,
Section~\ref{sec:select-best-cand}, introduces the whole algorithm,
together with the two measures to determine the best candidate
expression.

\subsection{Probabilistically Learning a Deterministic Automaton}
\label{sec:learn-koas-prob}

In particular, the algorithm first learns a \emph{deterministic
  $k$-occurrence automaton} (deterministic \koa) for $S$. This is a
specific kind of finite state automaton in which each alphabet symbol
can occur at most $k$ times. Figure~\ref{fig:example-koa} gives an
example. Note that in contrast to the classical definition of an
automaton, no edges are labeled: all incoming edges in a state $s$ are
assumed to be labeled by the label of $s$.  In other words, the $2\oa$
of Figure~\ref{fig:example-koa} accepts the same language as $aa?b^+$.

\begin{definition}[(\koa)]
  An \emph{automaton} is a node-labeled graph $G = (V, E,
  \lab)$ where
  \begin{itemize}
  \item $V$ is a finite set of nodes (also called \emph{states}) with
    a distinguished source $\src \in V$ and sink $\sink \in V$;
  \item the edge relation $E$ is such that $\src$ has only outgoing
    edges; $\sink$ has only incoming edges; and every state $v \in V -
    \{\src,\sink\}$
    is reachable by a walk from $\src$ to $\sink$;
  \item $\lab\colon V - \{\src,\sink\} \to \alphabet$ is the labeling
    function.
  \end{itemize}
  In this context, an \emph{accepting run} for a word $a_1 \dots a_n$
  is a walk $\src s_1 \dots s_n \sink$ from $\src$ to $\sink$ in $G$
  such that $a_i = \lab(s_i)$ for $1 \leq i \leq n$. As usual, we
  denote by $\lang(G)$ the set of all words for which an accepting run
  exists.
  An automaton is \emph{$k$-occurrence} (a \koa) if there are at most
  $k$ states labeled by the same alphabet symbol. If $G$ uses only
  labels in $A \subseteq \alphabet$ then $G$ is \emph{an
  automaton over $A$}. \hspace{1cm} \qed
\end{definition}

In what follows, we write $\Succ(s)$ for the set $\{t \mid (s,t) \in
E\}$ of all direct successors of state $s$ in $G$, and $\Pred(s)$ for
the set $\{t \mid (t,s) \in E\}$ of all direct predecessors of $s$ in
$G$. Furthermore, we write $\Succ(s,a)$ and $\Pred(s,a)$ for the set
of states in $\Succ(s)$ and $\Pred(s)$, respectively, that are labeled
by $a$. As usual, an automaton $G$ is \emph{deterministic} if
$\Succ(s,a)$ contains at most one state, for every $s \in V$ and $a
\in \alphabet$.


For convenience, we will also refer to the $1$\oas as ``single
occurence automata'' or \soas for short.

We learn a deterministic $\koa$ for a sample $S$ as follows. First,
recall from Section~\ref{sec:basic-results} that $\alphabet(S)$ is the
set of alphabet symbols occurring in words in $S$. We view $S$ as the
result of a stochastic process that generates words from $\alphabet^*$
by performing random walks on the \emph{complete \koa $C_k$ over
  $\alphabet(S)$}.

\begin{definition}
  Define the \emph{complete $\koa$} $C_k$ over $\alphabet(S)$ to be
  the $\koa$ $G = (V,E,\lab)$ over $\alphabet(S)$ in which each $a \in
  \alphabet(S)$ labels exactly $k$ states such that
  \begin{itemize}
  \item there is an edge from $\src$ to $\sink$;
  \item $\src$ is connected to exactly one state labeled by $a$, for
    every $a \in \alphabet(S)$; and 
  \item every state $s \in V - \{\src,\sink\}$ has an outgoing edge to
    every other state except $\src$. \hspace{11cm} \qed
  \end{itemize}
\end{definition}

To illustrate, the complete $2\oa$ over $\{a,b\}$ is shown in
Figure~\ref{fig:example-complete}. Clearly, $\lang(C_k) = \alphabet(S)^*$.

\begin{figure}[t]
  \centering
  \subfigure[An example $2\oa$. It accepts the same language as $aa?b^+$]{
    \hspace{1cm}\includegraphics*[viewport=126 373 195 422]{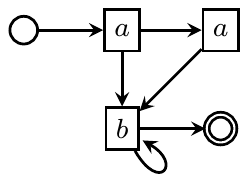}\hspace{1cm}
    \label{fig:example-koa}
  }
  \qquad\qquad
  \subfigure[The complete $2\oa$ over $\{a,b\}$.]
  {\includegraphics*[viewport=126 360 249 426]{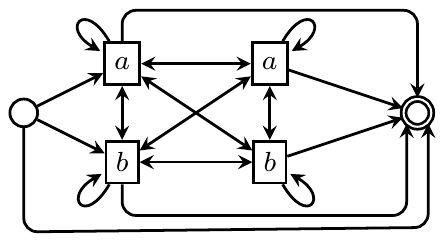}
    \label{fig:example-complete}
  }
  \caption{Two $2\oas$.}
\end{figure}

The stochastic process that generates words from $\alphabet^*$ by
performing random walks on $C_k$ operates as follows. First, the
process picks, among all states in $\Succ(\src)$, a state $s_1$ with
probability $\alpha(\src,s_1)$ and emits $\lab(s_1)$. Then it picks,
among all states in $\Out(s_1)$ a state $s_2$ with probability
$\alpha(s_1,s_2)$ and emits $\lab(s_2)$. The process continues moving
to new states and emitting their labels until the final state is
reached (which does not emit a symbol). Of course, $\alpha$ must be a
true probability distribution, i.e.,
\begin{equation}
  \label{eq:ml-1}
  \alpha(s,t) \geq 0; \quad \text{and} \quad \sum_{t \in \Out(s)}
  \alpha(s,t) = 1
\end{equation}
for all states $s \not = \sink$ and all states $t$.  The probability of generating a
particular accepting run $\vec{s} = \src s_1s_2\dots s_n\sink$ given
the process $\mathcal{P} = (C_k, \alpha)$ in this setting is
\[ \prob{\vec{s} \mid \mathcal{P}} = \alpha(\src,s_1) \cdot
\alpha(s_2,s_3) \cdot \alpha(s_2,s_3) \cdots
\alpha(s_n, \sink), \] and the probability of generating the
word $w = a_1 \dots a_n$ is
\[ \prob{w \mid \mathcal{P}} = \sum_{\text{all accepting runs
    $\vec{s}$ of $w$ in $C_k$}} \prob{\vec{s} \mid \mathcal{P}}. \]
Assuming independence, the probability of obtaining all words in the
sample $S$ is then
\[ \prob{S \mid \mathcal{P}} = \prod_{w \in S} \prob{w \mid
  \mathcal{P}}. \] Clearly, the process that best explains the
observation of $S$ is the one in which the probabilities $\alpha$ are
such that they maximize $\prob{S \mid \mathcal{P}}$.

To learn a deterministic $\koa$ for $S$ we therefore first try to
infer from $S$ the probability distribution $\alpha$ that maximizes
$\prob{S \mid \mathcal{P}}$, and use this distribution to determine
the topology of the desired deterministic $\koa$. In particular, we
remove from $C_k$ the non-deterministic edges with the lowest
probability as these are the least likely to contribute to the
generation of $S$, and are therefore the least likely to be necessary
for the acceptance of $S$.

The problem of inferring $\alpha$ from $S$ is well-studied in Machine
Learning, where our stochastic process $\mathcal{P}$ corresponds to a
particular kind of Hidden Markov Model sometimes referred to as a
Partially Observable Markov Model (\pomm for short). (For the readers
familiar with Hidden Markov Models we note that the initial state
distribution $\pi$ usually considered in Hidden Markov Models is
absorbed in the state transition distribution $\alpha(\src, \cdot)$ in
our context.) Inference of $\alpha$ is generally accomplished by the
well-known Baum-Welsh algorithm~\cite{Rabi89} that adjusts initial
values for $\alpha$ until a (possibly local) maximum is reached.

We use Baum-Welsh in our learning algorithm $\iKoa$ shown in
Algorithm~\ref{alg:iKoa}, which operates as follows. In line $1$,
$\iKoa$ initializes the stochastic process $\mathcal{P}$ to the tuple
$(C_k, \alpha)$ where
\begin{itemize}
\item $C_k$ is the complete \koa over $\alphabet(S)$;
\item $\alpha(\src,\sink)$ is the fraction of empty words in $S$;
\item $\alpha(\src,s)$ is the fraction of words in $S$ that start with
  $\lab(s)$, for every $s \in \Succ(\src)$; and
\item $\alpha(s,t)$ is chosen randomly for $s \not = \src$, subject to
  the constraints in equation \eqref{eq:ml-1}.
\end{itemize}
It is important to emphasize that, since we are trying to model a
stochastic process, multiple occurrences of the same word in $S$
\emph{are} important. A sample should therefore not be considered as a
set in Algorithm~\ref{alg:iKoa}, but as a \emph{bag}. Line $2$ then
optimizes the initial values of $\alpha$ using the Baum-Welsh algorithm.

\begin{algorithm}[t]
  \begin{algorithmic}[1]
    \REQUIRE a sample $S$, a value for $k$
    \ENSURE a deterministic \koa $G$ with $S \subseteq \lang(G)$
    \STATE $\mathcal{P} \leftarrow \Init(k,S)$
    \STATE $\mathcal{P} \leftarrow \BaumWelsh(\mathcal{P}, S)$
    \STATE $G \leftarrow \Disambiguate(\mathcal{P}, S)$
    \STATE $G \leftarrow \Prune(G, S)$
    \STATE \textbf{return} $G$
  \end{algorithmic}
  \caption{$\iKoa$}
  \label{alg:iKoa}
\end{algorithm}

\begin{algorithm}[tb]
  \begin{algorithmic}[1]
    \REQUIRE a POMM $\mathcal{P} = (G, \alpha)$ and sample $S$
    \ENSURE a deterministic $\koa$ 
    \STATE Initialize queue $Q$ to $\{ s \in \Succ(\src) \mid
    \alpha(\src,s) > 0\}$
    \STATE Initialize set of marked states $D \leftarrow \emptyset$
    \WHILE{$Q$ is non-empty}
     \STATE $s \leftarrow \First(Q)$
     \WHILE{some $a \in \Sigma$ has $\car{\Out(s,a)} > 1$}
       \STATE pick $t \in \Out(s,a)$ with $\alpha(s,t) =
       \max \{ \alpha(s,t') \mid t' \in \Out(s,a)\}$
       \STATE set $\alpha(s,t) \leftarrow \sum \{\alpha(s,t') \mid t' \in
         \Out(s,a)\}$
       \FOR{all $t'$ in $\Out(s,a) \setminus \{t\}$}
       \STATE delete edge $(s,t')$ from $G$
       \STATE set $\alpha(s,t') \leftarrow 0$
       \ENDFOR
       \STATE $\mathcal{P} \leftarrow \BaumWelsh(\mathcal{P},S)$
       \STATE \textbf{if} $S \not \subseteq \lang(G)$ \textbf{then Fail}
     \ENDWHILE
       \STATE add $s$ to marked states $D$ and pop $s$ from $Q$
       \STATE enqueue all states in $\Out(s) \setminus D$  to $Q$
   \ENDWHILE
   \STATE \textbf{return} $G$
  \end{algorithmic}
 \caption{\Disambiguate}
 \label{alg:disambiguate}
\end{algorithm}

With these probabilities in hand $\Disambiguate$, shown in
Algorithm~\ref{alg:disambiguate}, determines the topology of the
desired deterministic $\koa$ for $S$. In a breadth-first manner, it
picks for each state $s$ and each symbol $a$ the state $t \in
\Out(s,a)$ with the highest probability and deletes all other edges to
states labeled by $a$. Line $7$ merely ensures that $\alpha$ continues
to be a probability distribution after this removal and line $11$
adjusts $\alpha$ to the new topology. Line $12$ is a sanity check that
ensures that we have not removed edges necessary to accept all words
in $S$; $\Disambiguate$ reports failure otherwise. The result of a
successful run of $\Disambiguate$ is a deterministic $\koa$ which
nevertheless may have edges $(s,t)$ for which there is no
\emph{witness} in $S$ (i.e., a word in $S$ whose unique accepting run
traverses $(s,t)$). The function $\Prune$ in line $4$ of $\iKoa$
removes all such edges. It also removes all states $s \in \Succ(\src)$
without a witness in $S$.   Figure~\ref{fig:example}
illustrates a hypothetical run of $\iKoa$.

It should be noted that $\BaumWelsh$, which iteratively refines
$\alpha$ until a (possibly local) maximum is reached, is
computationally quite expensive. For that reason, our implementation
only executes a fixed number of refinement iterations of $\BaumWelsh$
in Line $11$. Rather surprisingly, this cut-off actually improves the
precision of $\learn$, as our experiments in
Section~\ref{sec:experiments} show, where it is discussed in more
detail.

\begin{figure*}[p]
{
\let\normalsize\scriptsize \normalsize

\subfigure[Process $\mathcal{P}$ returned by $\Init$ with random values for $\alpha$.]
{
\begin{tabular}{c}
\includegraphics*[viewport=125 360 249 426]{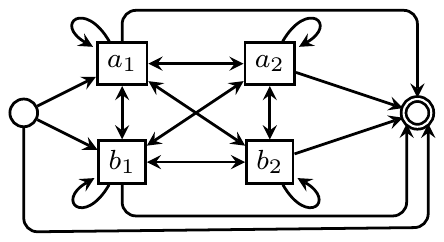}\\
    \begin{tabular}{r|ccccc}
   $\alpha$    & $a_1$ & $a_2$ & $b_1$ & $b_2$ & $\sink$ \\ \hline
$\src$  & 1 & $\backslash$ & 0 & $\backslash$ & 0 \\
$a_1$  & 0.2 & 0.3 & 0.3 & 0.1 & 0.1 \\
$a_2$  & 0.4 & 0.1 & 0.2 & 0.1 & 0.2 \\
$b_1$  & 0.1 & 0.3 & 0.3 & 0.2 & 0.1 \\
$b_2$  & 0.1 & 0.1 & 0.2 & 0.5 & 0.1 
  \end{tabular}
\end{tabular}
}
\hspace{0.3cm}
\subfigure[Process $\mathcal{P}$ after first training by $\BaumWelsh$.]
{
\begin{tabular}{c}
\includegraphics*[viewport=125 360 249 426]{fig/example-complete-2oa-indexed}\\
\begin{tabular}{r|ccccc}
    $\alpha$   & $a_1$ & $a_2$ & $b_1$ & $b_2$ & $\sink$ \\ \hline
$\src$  & 1 & $\backslash$ & 0 & $\backslash$ & 0 \\
$a_1$  & 0.2 & 0.3 & 0.3 & 0.19 & 0.01 \\
$a_2$  & 0.01 & 0.01 & 0.6 & 0.37 & 0.01 \\
$b_1$  & 0.01 & 0.01 & 0.5 & 0.28 & 0.2 \\
$b_2$  & 0.01 & 0.01 & 0.33 & 0.5 & 0.15 
  \end{tabular}
\end{tabular}
}
\vspace{1cm}
\subfigure[Process $\mathcal{P}$ after first disambiguation step (for
$a_1$). Edges to $a_1$ and $b_2$ are removed.]
{
\begin{tabular}{c}
\includegraphics*[viewport=125 360 249 426]{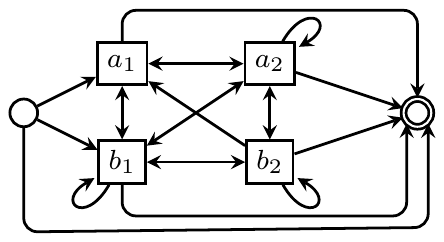}\\
  \begin{tabular}{r|ccccc}
    $\alpha$   & $a_1$ & $a_2$ & $b_1$ & $b_2$ & $\sink$ \\ \hline
$\src$  & 1 & $\backslash$ & 0 & $\backslash$ & 0 \\
$a_1$  & 0 & 0.5 & 0.49 & 0 & 0.01 \\
$a_2$  & 0.01 & 0.01 & 0.6 & 0.37 & 0.01 \\
$b_1$  & 0.01 & 0.01 & 0.5 & 0.28 & 0.2 \\
$b_2$  & 0.01 & 0.01 & 0.33 & 0.5 & 0.15 
  \end{tabular}
\end{tabular}
}
\hspace{0.3cm}
\subfigure[Process $\mathcal{P}$ after second disambiguation step (for
$b_1$). Edges to $a_2$ and $b_2$ are removed.]
{
\begin{tabular}{c}
\includegraphics*[viewport=125 360 249 426]{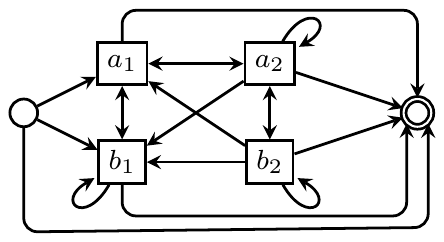}\\
  \begin{tabular}{r|ccccc}
    $\alpha$   & $a_1$ & $a_2$ & $b_1$ & $b_2$ & $\sink$ \\ \hline
$\src$  & 1 & $\backslash$ & 0 & $\backslash$ & 0 \\
$a_1$  & 0 & 0.5 & 0.49 & 0 & 0.01 \\
$a_2$  & 0.01 & 0.01 & 0.6 & 0.37 & 0.01 \\
$b_1$  & 0.02 & 0 & 0.78 & 0 & 0.2 \\
$b_2$  & 0.01 & 0.01 & 0.38 & 0.4 & 0.2 
  \end{tabular}
\end{tabular}
}
\subfigure[Automaton $A$ returned by $\Disambiguate$.]
{
  \hspace{0.7cm}
  \begin{tabular}{c}
\includegraphics*[viewport=126 360 249 426]{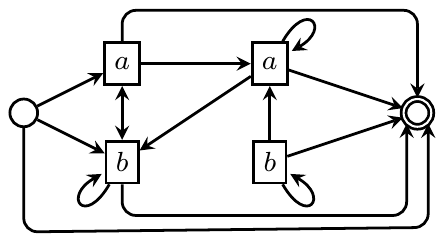}
  \end{tabular}
}
\hspace{1.5cm}
\subfigure[Automaton $A$ returned by $\Prune$. It accepts the same
language as $aa?b^+$.]
{
  \begin{tabular*}{5cm}{c}
  \hspace{1cm}
\includegraphics*[viewport=126 373 195 422]{fig/example-koa}
  \end{tabular*}
}
}
\caption{Example run of $\iKoa$ for $k=2$ with target language
  $aa?b^+$. For the process $\mathcal{P}$ in (c)-(f), the $\alpha$
  values are listed in table-form. To distinguish different states with the same
  label, we have indexed the labels.}
\label{fig:example}
\end{figure*}

\subsection{Translating \koas into \kores}
\label{sec:transl-autom-into-kores}


Once we have learned a deterministic \koa for a given sample $S$ using
\iKoa it remains to translate this \koa into a deterministic \kore. An
obvious approach in this respect would be to use the classical state
elimination algorithm (cf., e.g., \cite{Hopc79}). Unfortunately, as
already hinted upon by Fernau~\citeyear{Fern04,fernaualt} and as we
illustrate below, it is very difficult to get \emph{concise} regular
expressions from an automaton representation.  For instance, the
classical state elimination algorithm applied to the \soa in
Figure~\ref{fig:example-2tinf} yields the
expression:\footnote{Transformation computed by JFLAP:
  \url{www.jflap.org}.}
\[\small
\begin{array}{l}
  (aa^*d+(c+aa^*c)(c+aa^*c)^*(d+aa^*d)+(b+aa^*b+(c+{}\\
aa^*c)(c+aa^*c)^*(b+aa^*b))(aa^*b+(c+aa^*c)(c+aa^*c)^*\\
(b+aa^*b))^*(aa^*d+(c+aa^*c)(c+aa^*c)^*(d+aa^*d)))
(aa^*d+{}\\
(c+aa^*c)(c+aa^*c)^*(d+aa^*d)+(b+aa^*b+(c+aa^*c)(c+{}\\
aa^*c)^*(b+aa^*b))(aa^*b+(c+aa^*c)
(c+aa^*c)^*(b+aa^*b))^*\\
\end{array}
\]
which is non-deterministic and differs quite a bit from the equivalent
deterministic \sore $$\exre.$$

\begin{figure}[tbp]
  \centering
  \includegraphics*[viewport=125 363 224 426]{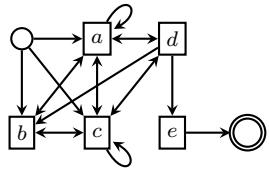}
  \caption{A \soa on which the classical state elimination algorithm returns a complicated expression.}
  \label{fig:example-2tinf}
\end{figure}

Actually, results by \citeN{Ehre76}; \citeN{stacs08}; and
\citeN{DBLP:conf/icalp/GruberH08} show that it is impossible in
general to generate concise regular expressions from automata: there
are \koas (even for $k = 1$) for which the number of occurrences of
alphabet symbols in the smallest equivalent expression is exponential
in the size of the automaton.  For such automata, an equivalent \kore
hence does not exist. 

It is then natural to ask whether there is an algorithm that
translates a given \koa into an equivalent \kore when such a \kore
exists, and returns a \kore super approximation of the input \koa
otherwise. Clearly, the above example shows that the classical state
elimination algorithm does not suffice for this purpose.  For that
reason, we have proposed in a companion article
\cite{gjb-sore-journal} a family of algorithms $\{ \cure, \kcure_1,
\kcure_2, \kcure_3, \dots \}$ that
translate \soas into \sores and have exactly these properties:

\begin{theorem}[(\cite{gjb-sore-journal})]
  \label{thm:www-kcure-sound}
  Let $G$ be a \soa and let $T$ be any of the algorithms in the family
  $\{\cure, \allowbreak \kcure_1, \allowbreak \kcure_2, \allowbreak
  \kcure_3, \dots\}$. If $G$ is equivalent to a \sore $r$, then $T(G)$
  returns a \sore equivalent to $r$. Otherwise, $T(G)$ returns a \sore
  that is a super approximation of $G$, $\lang(G) \subseteq
  \lang(T(G))$.
\end{theorem}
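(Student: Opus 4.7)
My plan is to prove the two claims in the theorem separately, by reducing everything to a case analysis on the rewrite steps performed by an algorithm $T \in \{\cure, \kcure_1, \kcure_2, \dots\}$. Each algorithm in this family operates by iteratively detecting structural patterns in the current \soa (such as \disjunction, \concatenation, \optional, and \selfloop), collapsing each such pattern into an intermediate \soa annotated with a \sore fragment, and, in the \kcure variants, occasionally applying an \enabledisjunction or \enableoptional step that modifies the graph to expose further patterns. Once the \soa has been reduced to a single edge, the annotation on that edge is returned as $T(G)$. Both parts of the theorem will follow from invariants maintained by these rewrite steps.

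For the super-approximation claim, I would proceed by induction on the sequence of rewrites. The invariant to maintain is: if $G_0, G_1, \dots, G_m$ is the sequence of intermediate \soas annotated with \sore fragments, then $\lang(G_i) \subseteq \lang(G_{i+1})$ for every $i$. For the basic collapse rules (\disjunction, \concatenation, \optional, \selfloop), the replacement is in fact language-preserving on the collapsed subgraph, so the invariant holds with equality; I would verify this by checking that every accepting run through the matched pattern corresponds to a word in the replaced subexpression. For the \enabledisjunction and \enableoptional steps, which add edges in order to force a matchable pattern, the replacement strictly enlarges the accepted language; this again follows by observing that the new edges only introduce new runs. Chaining these inclusions gives $\lang(G) \subseteq \lang(T(G))$.

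The main obstacle lies in the second claim: when $G$ is equivalent to some \sore $r$, $T(G)$ must return a \sore equivalent to $r$. My plan is to strengthen the first invariant into the following. Call an annotated \soa $G'$ \emph{faithful to $r$} if $\lang(G') = \lang(r)$. I would show that faithfulness is preserved by every rewrite step: (a) for the basic collapse rules this is immediate since those rules are language-preserving; (b) for the enabling rules, the key lemma is that if $G'$ is faithful to some \sore and the enabling step fires, then the added edges do not escape $\lang(r)$, because the pre-existing equivalence to a \sore forces exactly the structural regularities (a form of Glushkov-orbit or block structure) that make the enabling rule conservative. This lemma — that enabling steps are safe precisely when a \sore equivalent exists — is the hard part, and it is where the design of the algorithms in \cite{gjb-sore-journal} becomes essential. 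Once faithfulness is proved to be an invariant, termination of $T$ on a fully-collapsed single-edge \soa yields a \sore whose language equals $\lang(r)$, establishing the equivalence and completing the proof.
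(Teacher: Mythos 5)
This theorem is not proved in the paper at all: it is imported verbatim from the companion article \cite{gjb-sore-journal}, so there is no in-paper argument to compare against. Judged on its own terms, your proposal is a reasonable reconstruction of the overall architecture (iterated collapsing of subautomata into \sore fragments, with a monotonicity invariant for the super-approximation claim), and the first half — chaining $\lang(G_i) \subseteq \lang(G_{i+1})$ across rewrite steps, with equality for the exact collapse rules and inclusion for the repair rules — is sound and is essentially how soundness is established.

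The genuine gap is in the completeness direction, and you have in effect labelled it rather than closed it: your ``key lemma (b)'' — that when the current automaton is equivalent to a \sore, the \enabledisjunction and \enableoptional steps add only edges that stay inside $\lang(r)$ — is exactly the content of the theorem, and you offer no argument for it beyond an appeal to ``the design of the algorithms.'' Moreover, this is likely not the right lemma to aim for. The standard way to prove completeness for rewrite systems of this kind is a \emph{progress} lemma: one shows that whenever the current \soa is still equivalent to a \sore, at least one of the \emph{exact} (language-preserving) rules is applicable, so the repair rules simply never fire on such inputs; conservativity of the repair rules on \sore-equivalent automata need not hold and need not be proved. Proving that progress lemma requires a structural characterization of \soas that are equivalent to \sores (roughly, a decomposition of the state graph into orbits/blocks mirroring the parse tree of the expression), and that characterization is the real technical heart of the result. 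Without it, your faithfulness invariant is asserted rather than established, so the proposal as written does not constitute a proof of the second claim.
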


(Note that \soas and \sores are always deterministic by definition.)

These algorithms, in short, apply an inverse Glushkov
translation. Starting from a \koa where each state is labeled by a
symbol, they iteratively rewrite subautomata into equivalent regular
expressions. In the end only one state remains and the regular
expression labeling this state is the output.

In this section, we show how the above algorithms can be used to
translate \koas into \kores.  For simplicity of exposition, we will
focus our discussion on $\kcure_1$ as it is the concrete translation
algorithm used in our experiments in Section~\ref{sec:experiments},
but the same arguments apply to the other algorithms in the family.

\begin{definition}
  First, let $\mrk{\alphabet}{k}$ denote the alphabet that consists of $k$
  copies of the symbols in $\alphabet$, where the first copy of $a \in
  \alphabet$ is denoted by $\mrk{a}{1}$, the second by $\mrk{a}{2}$,
  and so on:
  \[ \mrk{\alphabet}{k} := \{ \mrk{a}{i} \mid a \in \alphabet, 1 \leq
  i \leq k \}. \] Let $\strip$ be the function mapping copies to
  their original symbol, i.e., $\strip(\mrk{a}{i}) = a$.  We extend $\strip$ pointwise
  to words, languages, and regular expressions over
  $\mrk{\alphabet}{k}$. \hspace{11.1cm} \qed
\end{definition}
For example, $\strip(\{ \mrk{a}{1}\mrk{a}{2}\mrk{b}{1}, \mrk{a}{2}
\mrk{a}{2}\mrk{c}{2}\}) = \{ aab, aac\}$ and $\strip(\mrk{a}{1} \con
\mrk{a}{2}?\con \allowbreak \mrk{b}{1}^+) = a\con a? \con b^+$\,.

To see how we can use $\kcure_1$, which translates \soas into \sores,
to translate a \koa into a \kore, observe that we can always transform
a \koa $G$ over $\alphabet$ into a \soa $H$ over $\mrk{\alphabet}{k}$
by processing the nodes of $G$ in an arbitrary order and replacing the
$i$th occurrence of label $a \in \alphabet$ by $\mrk{a}{i}$. To
illustrate, the \soa over $\mrk{\alphabet}{2}$ obtained in this way
from the $2\oa$ in Figure~\ref{fig:example-koa} is shown in
Figure~\ref{fig:example-marking}. Clearly, $\lang(G) = \strip(\lang(H))$.

\begin{definition}
  We call a \soa $H$ over $\mrk{\alphabet}{k}$ obtained from a \koa
  $G$ in the above manner a \emph{marking} of $G$. \hspace{8.2cm} \qed
\end{definition}

\begin{figure}[tbp]
  \centering
  \includegraphics*[viewport=125 371 198 424]{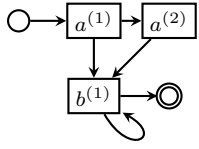}
  \caption{An example marking}
  \label{fig:example-marking}
\end{figure}

Note that, by Theorem~\ref{thm:www-kcure-sound}, running $\kcure_1$ on
$H$ yields a \sore $r$ over $\mrk{\alphabet}{k}$ with $\lang(H)
\subseteq \lang(r)$. For instance, with $H$ as in
Figure~\ref{fig:example-marking}, $\kcure(H)$ returns $r = \mrk{a}{1} \con
\mrk{a}{2}? \con \mrk{b}{1}^+$. By subsequently stripping $r$, we always
obtain a \kore over $\alphabet$. Moreover, $\lang(G) =
\strip(\lang(H)) \subseteq \strip(\lang(r)) = \lang(\strip(r))$, so
the \kore $\strip(r)$ is always a super approximation of
$G$. Algorithm~\ref{alg:KoaToKore}, called \KoaToKore, summarizes the
translation. By our discussion, \KoaToKore is clearly sound:

\begin{algorithm}[t]
  \begin{algorithmic}[1]
    \REQUIRE a \koa $G$
    \ENSURE a \kore $r$ with $\lang(G) \subseteq \lang(r)$
    \STATE compute a marking $H$ of $G$.
    \STATE \textbf{return} $\strip(\kcure_1(H))$
  \end{algorithmic}
  \caption{$\KoaToKore$}
  \label{alg:KoaToKore}
\end{algorithm}

\begin{proposition}
  \label{prop:koatokore-sound}
  $\KoaToKore(G)$ is a (possibly non-deterministic) \kore with
  $\lang(G) \subseteq \lang(\KoaToKore(G))$, for every $\koa$ $G$.
\end{proposition}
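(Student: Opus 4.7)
The plan is to follow the informal discussion immediately preceding the statement and turn it into a clean chain of inclusions. The two ingredients I need are (a) a language-preservation property of markings, namely $\lang(G) = \strip(\lang(H))$ whenever $H$ is a marking of $G$, and (b) the fact that $\strip$ commutes with $\lang$, i.e., $\strip(\lang(r)) = \lang(\strip(r))$ for any regular expression $r$ over $\mrk{\alphabet}{k}$. Given these, I can combine them with Theorem~\ref{thm:www-kcure-sound} in a straight line.

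First I would verify (a). The marking $H$ of $G$ is obtained by relabeling each node $v$ of $G$ carrying label $a$ with some $\mrk{a}{i}$, in a way that makes the relabeling injective. Hence $G$ and $H$ have identical graph structure, and the accepting runs of $H$ on a word $w'\in(\mrk{\alphabet}{k})^*$ are in bijection with the accepting runs of $G$ on $\strip(w')$; this yields $\lang(G) = \strip(\lang(H))$ directly from the definition of an accepting run.

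Next I would establish (b) by a straightforward structural induction on regular expressions $r$ over $\mrk{\alphabet}{k}$, using that $\strip$ is a monoid homomorphism from $(\mrk{\alphabet}{k})^*$ to $\alphabet^*$ (hence preserves concatenation and, pointwise, every regular operator). The base cases $r = \emptyset, \emptystr, \mrk{a}{i}$ are immediate; the inductive cases follow from elementary identities such as $\strip(L_1 \cdot L_2) = \strip(L_1) \cdot \strip(L_2)$ and $\strip(L^+) = \strip(L)^+$.

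With (a) and (b) at hand, the proof becomes a short calculation. Let $H$ be the marking computed in line~1 and let $r = \kcure_1(H)$. Since $H$ is a \soa, Theorem~\ref{thm:www-kcure-sound} guarantees that $r$ is a \sore over $\mrk{\alphabet}{k}$ with $\lang(H) \subseteq \lang(r)$. Applying $\strip$ and using (a) and (b),
\[
\lang(G) \;=\; \strip(\lang(H)) \;\subseteq\; \strip(\lang(r)) \;=\; \lang(\strip(r)) \;=\; \lang(\KoaToKore(G)).
\]
It remains to argue that $\strip(r)$ is a \kore over $\alphabet$. Because $r$ is a \sore over $\mrk{\alphabet}{k}$, each marked symbol $\mrk{a}{i}$ occurs at most once in $r$; since $\strip$ merely erases the superscripts and there are at most $k$ distinct copies of each $a \in \alphabet$ in $\mrk{\alphabet}{k}$, the symbol $a$ occurs at most $k$ times in $\strip(r)$. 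No step of the argument is genuinely hard; the only mild subtlety is point (b), which must be stated carefully so that one can invoke it on the output of $\kcure_1$ without needing to know anything about that output beyond its being a regular expression over the marked alphabet.
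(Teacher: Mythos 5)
Your proposal is correct and follows essentially the same route as the paper, which proves the proposition via the chain $\lang(G) = \strip(\lang(H)) \subseteq \strip(\lang(\kcure_1(H))) = \lang(\strip(\kcure_1(H)))$ using Theorem~\ref{thm:www-kcure-sound}; you merely make explicit the two auxiliary facts (the marking preserves the language up to $\strip$, and $\strip$ commutes with $\lang$) that the paper treats as immediate. The added observation that stripping a \sore over $\mrk{\alphabet}{k}$ yields a \kore over $\alphabet$ is likewise the paper's implicit argument, spelled out.
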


Note, however, that even when $G$ is deterministic and equivalent to a
deterministic \kore $r$, $\KoaToKore(G)$ need not be deterministic,
nor equivalent to $r$. For instance, consider the $2\oa$ $G$:
\[ \includegraphics*[viewport=125 373 221
  416]{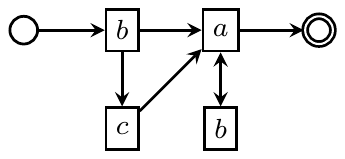} \] Clearly, $G$ is
equivalent to the deterministic $2\ore$ $bc?a(ba)^+?$.  Now suppose
for the purpose of illustration that $\KoaToKore$ constructs the
following marking $H$ of $G$. (It does not matter which marking
\KoaToKore constructs, they all result in the same final expression.)
\[ \includegraphics*[viewport=125 373 221
416]{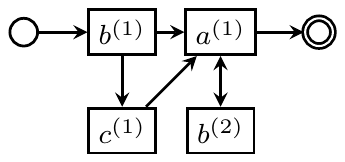} \] Since $H$ is not
equivalent to a \sore over $\mrk{\alphabet}{k}$, $\kcure_1(H)$ need
not be equivalent to $\lang(H)$. In fact, $\kcure_1(H)$ returns
$((\mrk{b}{1}\mrk{c}{1}?\mrk{a}{1})?\mrk{b}{2}?)^+$, which yields the
non-deterministic $((bc?a)?b?)^+$ after stripping. Nevertheless, $G$
is equivalent to the deterministic $2\ore$ $bc?a(ba)^+?$.

So although $\KoaToKore$ is always guaranteed to return a $\kore$, it
does not provide the same strong guarantees that $\kcure_1$ provides
(Theorem~\ref{thm:www-kcure-sound}). The following theorem shows,
however, that if we can obtain $G$ by applying the Glushkov
construction on $r$ \cite{Brug93}, $\KoaToKore(G)$ is always
equivalent to $r$. Moreover, if $r$ is deterministic, then so is
$\KoaToKore(G)$. So in this sense, $\KoaToKore$ applies an inverse
Glushkov construction to $r$. Formally, the Glushkov construction is
defined as follows.

\begin{definition}
  \label{def:glushkov-translation}
  Let $r$ be a \kore. Recall from Definition~\ref{def:deterministic}
  that $\overline{r}$ is the regular expression obtained from $r$ by
  replacing the $i$th occurrence of alphabet symbol $a$ by
  $\mrk{a}{i}$, for every $a \in \alphabet$ and every $1 \leq i \leq
  n$. Let $\pos(\overline{r})$ denote the symbols in
  $\mrk{\alphabet}{k}$ that actually appear in
  $\overline{r}$. Moreover, let the sets $\first(\overline{r})$,
  $\last(\overline{r})$, and $\follow(\overline{r},\mrk{a}{i})$ be
  defined as shown in Figure~\ref{fig:glushkov-sets}. A \koa $G$ is a
  \emph{Glushkov translation} of $r$ if there exists a one-to-one onto
  mapping $\rho \colon (V(G) - \{\src,\sink\}) \to \pos(\overline{r})$
  such that
  \begin{enumerate}
  \item $v \in \Succ(\src) \Leftrightarrow \rho(v) \in \first(\overline{r})$;
  \item $v \in \Pred(\sink) \Leftrightarrow \rho(v) \in \last(\overline{r})$;
  \item $v \in \Succ(w) \Leftrightarrow \rho(v) \in
    \follow(\overline{r},\rho(w))$; and
  \item $\strip(\rho(v)) = \lab(v)$,
  \end{enumerate}
  for all $v,w \in V(G) - \{\src,\sink\}$.\hspace{7cm} \qed
\end{definition}

\begin{figure}
  \centering
  \begin{tabular*}{1.0\linewidth}{rclcrcl}
    $\first(\emptyset)$ & = & $\emptyset$ & & $\first(\emptystr)$ & =
    & $\emptyset$ \\
    $\first(\mrk{a}{i})$ & = & \{\mrk{a}{i}\} & & $\first(\overline{r}?)$ & = & $\first(\overline{r})$\\
  $\first(\overline{r}^+)$ & = & $\first(\overline{r})$ & & $\first(\overline{r} + \overline{s})$ & = & $\first(\overline{r})
    \cup \first(\overline{s})$ \\
   $\first(\overline{r} \con \overline{s})$ & = & \multicolumn{5}{l}{$
     \begin{cases}
       \first(\overline{r}) & \text{if $\emptystr \notin \lang(\overline{r})$,}\\
       \first(\overline{r}) \cup \first(\overline{s})  & \text{otherwise.}\\
     \end{cases}$} \vspace{0.5cm}\\

    $\last(\emptyset)$ & = & $\emptyset$ & & $\last(\emptystr)$ & =
    & $\emptyset$ \\
    $\last(\mrk{a}{i})$ & = & \{\mrk{a}{i}\} & & $\last(\overline{r}?)$ & = & $\last(\overline{r})$\\
  $\last(\overline{r}^+)$ & = & $\last(\overline{r})$ & & $\last(\overline{r} + \overline{s})$ & = & $\last(\overline{r})
    \cup \last(\overline{s})$ \\
   $\last(\overline{r} \con \overline{s})$ & = & \multicolumn{5}{l}{
     $\begin{cases}
       \last(\overline{s}) & \text{if $\emptystr \notin \lang(\overline{s})$,}\\
       \last(\overline{r}) \cup \last(\overline{s})  & \text{otherwise.}\\
     \end{cases}$}\vspace{0.5cm} \\

    $\follow(\mrk{a}{i},\mrk{a}{i})$ & = & $\emptyset$ & &
     &  & \\

    $\follow(\overline{r}?,\mrk{a}{i})$ & =  &  $\follow(\overline{r},\mrk{a}{i})$ & &
     &  &\\

  $\follow(\overline{r}^+,\mrk{a}{i})$ & = & 
  \multicolumn{5}{l}{$\begin{cases}
       \follow(\overline{r},\mrk{a}{i}) & \hspace{17pt} \mbox{if } \mrk{a}{i} \notin \last(\overline{r}),\\
       \follow(\overline{r},\mrk{a}{i}) \cup \first(\overline{r})  &
       \hspace{17pt} \text{otherwise.}\\
     \end{cases}$}\\

  $\follow(\overline{r} + \overline{s},\mrk{a}{i})$ & = & 
  \multicolumn{5}{l}{$\begin{cases}
       \follow(\overline{r},\mrk{a}{i}) & \hspace{36pt} \mbox{if } \mrk{a}{i} \in
       \pos(\overline{r}),\\
       \follow(\overline{s},\mrk{a}{i})  & \hspace{36pt} \text{otherwise.}\\
     \end{cases}$} \\

  $\follow(\overline{r} \con \overline{s},\mrk{a}{i})$ & = & 
  \multicolumn{5}{l}{$\begin{cases}
       \follow(\overline{r},\mrk{a}{i}) & \mbox{if } \mrk{a}{i} \in
       \pos(\overline{r}), \mrk{a}{i} \notin \last(\overline{r}),\\
       \follow(\overline{r},\mrk{a}{i}) \cup \first(\overline{s})& \mbox{if } \mrk{a}{i} \in
       \pos(\overline{r}), \mrk{a}{i} \in \last(\overline{r}),\\
\follow(\overline{s},\mrk{a}{i}) & \text{otherwise.}\\
     \end{cases}$}\\
  \end{tabular*}
  \caption{Definition of $\first(\overline{r})$,
    $\last(\overline{r})$, and $\follow(\overline{r},\mrk{a}{i})$, for
    $\mrk{a}{i} \in \pos(\overline{r})$.}
  \label{fig:glushkov-sets}
\end{figure}

\begin{theorem}
  \label{thm:koatokore-complete}
  If \koa $G$ is a Glushkov representation of a target
  $\kore$ $r$, then $\KoaToKore(G)$ is equivalent to $r$. Moreover, if
  $r$ is deterministic, then so is $\KoaToKore(G)$.
\end{theorem}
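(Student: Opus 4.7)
My approach is to unfold $\KoaToKore(G) = \strip(\kcure_1(H))$, where $H$ is the \soa marking of $G$ produced in step~1 of Algorithm~\ref{alg:KoaToKore}, and exploit the hypothesis that $G$ is a Glushkov representation of $r$. The key observation I would make is that, under this hypothesis, $H$ is itself the Glushkov automaton of the marked expression $\overline{r}$, up to a letter-wise renaming of indices that records the (arbitrary) order in which $\KoaToKore$ chose to mark the states of $G$. Consequently $H$ is equivalent to a \sore over $\mrk{\alphabet}{k}$, and Theorem~\ref{thm:www-kcure-sound} immediately delivers the desired guarantees for $\kcure_1(H)$.

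In detail: let $\rho$ be the bijection witnessing that $G$ is a Glushkov representation of $r$, and let $\sigma$ be the bijection mapping each state $v$ of $G$ to the marked symbol $\mrk{a}{j}$ it receives in $H$. Both $\rho(v)$ and $\sigma(v)$ have base symbol $\lab(v)$, so the composition $\pi := \sigma \circ \rho^{-1}$ is a letter-wise permutation of $\pos(\overline{r})$. Define $\overline{r}' := \pi(\overline{r})$, which is again a \sore over $\mrk{\alphabet}{k}$ since $\pi$ merely permutes indices within each letter class. I would then verify, directly from the recursive definitions of Figure~\ref{fig:glushkov-sets}, that $\first$, $\last$, and $\follow$ commute with $\pi$, so the four defining conditions of Definition~\ref{def:glushkov-translation} transport from $(G,\rho)$ for $\overline{r}$ to $(H,\sigma)$ for $\overline{r}'$. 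Thus $H$ is the Glushkov automaton of $\overline{r}'$ and hence $\lang(H) = \lang(\overline{r}')$. Applying Theorem~\ref{thm:www-kcure-sound} to the \soa $H$, which is equivalent to the \sore $\overline{r}'$, yields a \sore $s$ over $\mrk{\alphabet}{k}$ with $\lang(s) = \lang(\overline{r}')$. Stripping then gives $\lang(\strip(s)) = \strip(\lang(s)) = \strip(\lang(\overline{r}')) = \strip(\lang(\overline{r})) = \lang(r)$, since the permutation $\pi$ is invisible once the indices are discarded.

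For the determinism claim, I would unpack Definition~\ref{def:deterministic}: $r$ deterministic means $\lang(\overline{r})$ contains no pair of words $w\mrk{a}{i}v$, $w\mrk{a}{j}v'$ with $i \neq j$. Since $\pi$ is a letter-wise bijection on indices, $\lang(\overline{r}') = \pi(\lang(\overline{r}))$ is free of such pairs as well, and hence so is $\lang(s)$. Finally, $\overline{\strip(s)}$ differs from $s$ only by another letter-wise renaming $\tau$ induced by the canonical left-to-right numbering of occurrences in $\strip(s)$, and this $\tau$ again preserves the absence of conflicting pairs; therefore $\strip(s)$ is deterministic. The main technical obstacle I anticipate is exactly this permutation bookkeeping: three different numberings (the one supplied by $\rho$, the one chosen by the marking routine, and the canonical one used when forming $\overline{\strip(s)}$) must be tracked, and one has to verify that both language equivalence and the absence of determinism conflicts are invariant under the letter-wise permutations that relate them. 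Once this is nailed down, the theorem reduces to a clean invocation of Theorem~\ref{thm:www-kcure-sound} on $H$.
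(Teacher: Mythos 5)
Your proposal is correct and follows essentially the same route as the paper's proof: both construct the composite letter-preserving bijection (your $\pi$ is the paper's $\sigma$), show that the marking $H$ is language-equivalent to the permuted marked expression $\sigma(\overline{r})$ via the $\first$/$\follow$/$\last$ characterization, invoke Theorem~\ref{thm:www-kcure-sound}, and transfer determinism through the letter-wise renamings. The only cosmetic difference is that you phrase the key step as ``$H$ is the Glushkov automaton of $\overline{r}'$'' while the paper directly chains the membership equivalences, and you argue determinism forward where the paper argues by contradiction.
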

\begin{proof}
  Since $\KoaToKore(G) = \strip(\kcure_1(H))$ for an arbitrarily
  chosen marking $H$ of $G$, it suffices to prove that
  $\strip(\kcure_1(H))$ is equivalent to $r$ and that
  $\strip(\kcure_1(H))$ is deterministic whenever $r$ is
  deterministic, for every marking $H$ of $G$.  Hereto, let $H$ be an
  arbitrary but fixed marking of $G$. In particular, $G$ and $H$ have
  the same set of nodes $V$ and edges $E$, but differ in their
  labeling function. Let $\lab_G$ be the labeling function of $G$ and
  let $\lab_H$ the labeling function of $H$. Clearly, $\lab_G(v) =
  \strip(\lab_H(v))$ for every $v \in V - \{\src,\sink\}$.  Since $G$
  is a Glushkov translation of $r$, there is a one-to-one, onto
  mapping $\rho \colon (V - \{\src,\sink\})\to \pos(\overline{r})$
  satisfying properties (1)-(4) in
  Definition~\ref{def:glushkov-translation}.  Now let $\sigma\colon
  \pos(\overline{r}) \to \mrk{\alphabet}{k}$ be the function that maps
  $\mrk{a}{i} \in \pos(\overline{r})$ to
  $\lab_H(\rho^{-1}(\mrk{a}{i}))$. Since $\lab_H$ assigns a distinct
  label to each state, $\sigma$ is one-to-one and onto the subset of
  $\mrk{\alphabet}{k}$ symbols used as labels in $H$. Moreover, by
  property (4) and the fact that $\lab_G(v) = \strip(\lab_H(v))$ we
  have,
  \begin{equation}
    \label{thm:koatokore-complete-eq-1}
    \strip(\mrk{a}{i}) = \lab_G(\rho^{-1}(\mrk{a}{i})) =
  \strip(\lab_H(\rho^{-1}(\mrk{a}{i}))) =
  \strip(\sigma(\mrk{a}{i})) \tag{$\star$}
  \end{equation}
 for each $\mrk{a}{i} \in
  \pos(\overline{r})$. In other words, $\sigma$ preserves (stripped)
  labels. Now let $\sigma(\overline{r})$ be the \sore obtained from
  $\overline{r}$ by replacing each $\mrk{a}{i} \in \pos(\overline{r})$ by
  $\sigma(\mrk{a}{i})$. Since $\sigma$ is one-to-one and
  $\overline{r}$ is a \sore, so is $\sigma(\overline{r})$. Moreover,
  we claim that $\lang(H) = \lang(\sigma(\overline{r}))$. 

  Indeed, it is readily verified by induction on $\overline{r}$ that a
  word $\mrk{a_1}{i_1} \dots \mrk{a_n}{i_n} \in \lang(\overline{r})$
  if, and only if, (i) $\mrk{a_1}{i_1} \in \first(\overline{r})$; (ii)
  $\mrk{a_{p+1}}{i_{p+1}} \in \follow(\overline{r},
  \mrk{a_{p+1}}{i_{p+1}})$ for every $1\leq p < n$; and (iii)
  $\mrk{a_n}{i_n} \in \last(\overline{r})$. By properties (1)-(4) of
  Definition~\ref{def:glushkov-translation} we hence obtain:
  \[ 
  \begin{array}{ll}
    & \sigma(\mrk{a_1}{i_1}) \dots \sigma(\mrk{a_n}{i_n}) \in
    \lang(\sigma(\overline{r})) \\
    \Leftrightarrow & 
    \mrk{a_1}{i_1} \dots \mrk{a_n}{i_n} \in \lang(\overline{r}) \\
    \Leftrightarrow & 
    \src, \rho^{-1}(\mrk{a_1}{i_1}), \dots, \rho^{-1}(\mrk{a_n}{i_n}), \sink
    \text{ is a walk in } G \\
    \Leftrightarrow & 
    \src, \rho^{-1}(\mrk{a_1}{i_1}), \dots, \rho^{-1}(\mrk{a_n}{i_n}), \sink
    \text{ is a walk in } H \\
    \Leftrightarrow & 
    \lab_H(\rho^{-1}(\mrk{a_1}{i_1})) \dots,
    \lab_H(\rho^{-1}(\mrk{a_n}{i_n})) \in \lang(H) \\
    \Leftrightarrow & 
     \sigma(\mrk{a_1}{i_1}) \dots \sigma(\mrk{a_n}{i_n}) \in
    \lang(H)
  \end{array}\]
  Therefore, $\lang(H) = \lang(\sigma(\overline{r}))$. 

  Hence, we have established that $H$ is a \soa over
  $\mrk{\alphabet}{k}$ equivalent to the \sore $\sigma(\overline{r})$
  over $\mrk{\alphabet}{k}$. By Theorem~\ref{thm:www-kcure-sound},
  $\kcure_1(H)$ is hence equivalent to
  $\sigma(\overline{r})$. Therefore, $\strip(\kcure_1(H))$ is
  equivalent to $\strip(\sigma(\overline{r}))$, which by
  \eqref{thm:koatokore-complete-eq-1}  
  above, is equivalent to $\strip(\overline{r}) = r$, as desired.

  Finally, to see that $\strip(\kcure_1(H))$ is deterministic if $r$
  is deterministic, let $s := \strip(\kcure_1(H))$ and suppose for the
  purpose of contradiction that $s$ is not deterministic. Then there
  exists $w\mrk{a}{i}v_1$ and $w\mrk{a}{j}{v_2}$ in $\lang(\overline{s})$
  with $i \not = j$. It is not hard to see that this can happen only
  if there exist $w'\mrk{a}{i'}v_1'$ and $w'\mrk{a}{j'}{v_2'}$ in
  $\lang(\kcure_1(H))$ with $i' \not = j'$.  Since $\lang(\kcure_1(H))
  = \lang(\sigma(\overline{r}))$ we know that hence
  $\sigma^{-1}(w'\mrk{a}{i'}v_1') \in \lang(\overline{r})$ and
  $\sigma^{-1}(w'\mrk{a}{j'}v_2') \in \lang(\overline{r})$. Let
  $w''\mrk{a}{i''}v_1'' = \sigma^{-1}(w'\mrk{a}{i'}v_1')$ and
  $w''\mrk{a}{j''}v_2'' = \sigma^{-1}(w'\mrk{a}{i'}v_2')$. Since $\sigma$
  is one-to-one and $i' \not = j'$, also $i'' \not = j''$. Therefore,
  $r$ is not deterministic, which yields the desired contradiction.
\end{proof}

\subsection{The whole Algorithm}
\label{sec:select-best-cand}

Our deterministic regular expression inference algorithm $\learn$
combines $\iKoa$ and $\KoaToKore$ as shown in
Algorithm~\ref{alg:Learn}. For increasing values of $k$ until a
maximum $k_{\mathrm{max}}$ is reached, it first learns a deterministic
\koa $G$ from the given sample $S$, and subsequently translates that
\koa into a \kore using $\KoaToKore$. If the resulting \kore is
deterministic then it is added to the set $C$ of deterministic
candidate expressions for $S$, otherwise it is discarded.  From this
set of candidate expressions, $\learn$ returns the ``best'' regular
expression $\Best(C)$, which is determined according to one of the
measures introduced below. Since it is well-known that, depending on
the initial value of $\alpha$, $\BaumWelsh$ (and therefore $\iKoa$)
may converge to a local maximum that is not necessarily global, we
apply $\iKoa$ a number of times $N$ with independently chosen random
seed values for $\alpha$ to increase the probability of correctly
learning the target regular expression from $S$.

\begin{algorithm}[tbp]
  \begin{algorithmic}[1]
    \REQUIRE a sample $S$
    \ENSURE a $\kore$ $r$
    \STATE initialize candidate set $C \leftarrow \emptyset$
    \FOR{$k = 1$ to $k_{\mathrm{max}}$}
      \FOR{$n = 1$ to $N$}
      \STATE $G \leftarrow \iKoa(S,k)$
      \IF{$\KoaToKore(G)$ is deterministic}
      \STATE add $\KoaToKore(G)$ to $C$
      \ENDIF
      \ENDFOR
    \ENDFOR
    \STATE \textbf{return} $\Best(C)$  
  \end{algorithmic}
  \caption{\learn}
  \label{alg:Learn}
\end{algorithm}

The observant reader may wonder whether we are always guaranteed to
derive at least one deterministic expression such that $\Best(C)$ is
defined.  Indeed, Theorem~\ref{thm:koatokore-complete} tells us that
if we manage to learn from sample $S$ a $\koa$ which is the Glushkov
representation of the target expression $r$, then $\KoaToKore$ will
always return a deterministic $\kore$ equivalent to $r$.  When $k>1$,
there can be several $\koa$s representing the same language and we
could therefore learn a non-Glushkov one. In that case, $\KoaToKore$
always returns a $\kore$ which is a super approximation of the target
expression. Although that approximation can be non-deterministic,
since we derive $\kores$ for increasing values of $k$ and since for
$k=1$ the result of $\KoaToKore$ is always deterministic (as every
$\sore$ is deterministic), we always infer at least one deterministic
regular expression.  In fact, in our experiments on 100 synthetic
regular expressions, we derived for 96 of them a deterministic
expression with $k>1$, and only for 4 expressions had to resort to a
$1\ore$ approximation.

\subsubsection{A Language Size Measure for Determining the Best
  Candidate}
\label{sec:language-size-select-best-cand}


Intuitively, we want to select from $C$ the simplest deterministic
expression that ``best'' describes $S$. Since each candidate
expression in $C$ accepts all words in $S$ by construction, one way to
interpret ``the best'' is to select the expression that accepts the
least number of words (thereby adding the least number of words to
$S$). Since an expression defines an infinite language in general, it
is of course impossible to take all words into account. We therefore
only consider the words up to a length $n$, where $n = 2m + 1$ with
$m$ the length of the candidate expression, excluding regular
expression operators, $\emptyset$, and $\emptystr$. For instance, if
the candidate expression is $a \con (a + c^+)?$, then $m = 3$ and $n =
7$. Formally, for a language $L$, let $|L^{\leq n}|$ denote the number
of words in $L$ of length at most $n$. Then the best candidate in $C$
is the one with the least value of $|\lang(r)^{\leq n}|$. If there are
multiple such candidates, we pick the shortest one (breaking ties
arbitrarily). It turns out that $|\lang(r)^{\leq n}|$ can be computed
quite efficiently; see \cite{gjb-sore-journal} for details.

\subsubsection{A Minimum Description Length Measure for Determining
  the Best Candidate}
\label{sec:mdl-select-best-cand}

An alternative measure to determine the best candidate is given by
Adriaans and Vit\'anyi~\citeyear{Adri06a}, who compare the size of $S$
with the size of the language of a candidate $r$.  Specifically,
Adriaans and Vit\'anyi define the data encoding cost of $r$ to be:
\[ \Data(r, \Corpus) := \sum_{i = 0}^{n} \left( 2 \cdot \log_2 i +
  \log_2 {|\lang^{=i}(r)| \choose |\Corpus^{=i}|} \right), \] where $n
= 2m + 1$ as before; $|S^{=i}|$ is the number of words in $S$ that
have length $i$; and $|\lang^{=i}(r)|$ is the number of words in
$\lang(r)$ that have exactly length $i$. Although the above formula is
numerically difficult to compute, there is an easier estimation
procedure; see~\cite{Adri06a} for details. 

In this case, the model encoding cost is simply taken to be its
length, thereby preferring shorter expressions over longer ones. The
best regular expression in the candidate set $C$ is then the one that
minimizes both model and data encoding cost (breaking ties
arbitrarily).

We already mentioned that \xtract~\cite{Garo03} also utilizes the
Minimum Description Length principle. However, their measure for data
encoding cost depends on the concrete structure of the regular
expressions while ours only depends on the language defined by them
and is independent of the representation. Therefore, in our setting,
when two equivalent expressions are derived, the one with the smallest
model cost, that is, the simplest one, will always be taken.



\section{Experiments}
\label{sec:experiments}

In this section we validate our approach by means of an experimental
analysis. Throughout the section, we say that a target \kore $r$
\emph{is successfully derived} when a \kore $s$ with $\lang(r) =
\lang(s)$ is generated. The \emph{success rate} of our experiments
then is the percentage of successfully derived target regular
expressions.

Our previous work~\cite{Bex08} on this topic was based on a version of
the \vldb algorithm~\cite{Bex06}, we refer to this algorithm as
$\learn(\vldb)$.  Unfortunately, as detailed in~\cite{vldbj}, it is
not known whether \vldb is complete on the class of all single
occurrence regular expressions.  Nevertheless, the experiments
in~\cite{Bex08} which are revisited below show a good and reliable
performance.  However, to obtain a theoretically complete algorithm,
c.f.r.~Theorem~\ref{thm:koatokore-complete}, we use the algorithm
\kcure which is sound and complete on single occurrence regular
expressions.  In the remainder we focus on \learn, but compare with
the results for $\learn(\vldb)$.

As mentioned in Section~\ref{sec:language-size-select-best-cand},
another new aspect of the results presented here is the use of
language size as an alternative measure over Minimum Description
Length (MDL) to compare
candidates.  The $\learn(\vldb)$ algorithm is only considered with
the MDL criterion.  We note that for alphabet size 5, the success rate
of \learn with the MDL criterion was only 21 \%, while that of the
language size criterion is 98 \%.  The corpus used in this experiment
is described in Section~\ref{sec:synth-target}.  Therefore in the
remainder of this section we only consider \learn with the language
size criterion.

For all the experiments described below we take $k_{\text{max}}=4$ and
$N=10$ in Algorithm~\ref{alg:Learn}.

\subsection{Running times}
\label{sec:runtimes}

All experiments were performed using a prototype implementation of
$\learn$ and $\learn(\vldb)$ written in Java executed on Pentium M 2.0
GHz class machines equipped with 1GB RAM. For the $\BaumWelsh$
subroutine we have gratefully used Jean-Marc Fran\c{c}ois'
\emph{Jahmm} library~\cite{Jahmm}, which is a faithful implementation
of the algorithms described in Rabiner's Hidden Markov Model
tutorial~\cite{Rabi89}. Since Jahmm strives for clarity rather than
performance and since only limited precautions are taken against
underflows, our prototype should be seen as a proof of concept rather
than a polished product. In particular, underflows currently limit us
to target regular expressions whose total number of symbol occurrences
is at most $40$. Here, the total number of symbol occurrences
$\occ(r)$ of a regular expression $r$ is its length excluding the
regular expression operators and parenthesis. To illustrate, the total
number of symbol occurrences in $aa?b^+$ is $3$. Furthermore, the lack
of optimization in Jahmm leads to average running times ranging from 4
minutes for target expressions $r$ with $|\Sigma(r)| = 5$ and $\occ(r)
= 6$ to 9 hours for targets expression with $|\Sigma(r)| = 15$ and
$\occ(r) = 30$.  Running times for \learn and $\learn(\vldb)$ are
similar.

As already mentioned in Section~\ref{sec:select-best-cand}, one of the
bottlenecks of $\learn$ is the application of \BaumWelsh in Line $11$
of $\Disambiguate$ (Algorithm~\ref{alg:disambiguate}).  \BaumWelsh is
an iterative procedure that is typically run until convergence, i.e.,
until the computed probability distribution no longer change
significantly.  To improve the running time, we only apply a fixed
number $\ell$ of iteration steps when calling $\BaumWelsh$ in Line
$11$ of $\Disambiguate$.  Experiments show that the running time
performance scales linear with $\ell$ as one expects, but, perhaps
surprisingly, the success rate improves as well for an optimal value
of $\ell$.  This optimal value for $\ell$
depends on the alphabet size.  These improved results can be explained
as follows: applying \BaumWelsh in each disambiguation step until it
converges guarantees that the probability distribution for that step
will have reached a local optimum.  However, we know that the search
space for the algorithm contains many local optima, and that
\BaumWelsh is a local optimization algorithm, i.e., it will converge
to one of the local optima it can reach from its starting point by
hill climbing.  The disambiguation procedure proceeds state by state,
so fine tuning the probability distribution for a disambiguation step
may transform the search space so that certain local optima for the
next iteration can no longer be reached by a local search algorithm
such as \BaumWelsh.  Table~\ref{tab:lim-disamb} shows the performance
of the algorithm for various number of \BaumWelsh iterations $\ell$
for expressions of alphabet size 5, 10 and 15.  These expressions are
those described in Section~\ref{sec:synth-target}.  In this Table,
$\ell = \infty$ denotes the case where \BaumWelsh is ran until
convergence after each disambiguation step.  The Table illustrates
that the success rate is actually higher for small values of $\ell$.
The running time performance gains increase rapidly with the
expressions' alphabet size: for $|\alphabet| = 5$, we gain a factor of
3.5 ($\ell = 2$), for $|\alphabet| = 10$, it is already a factor of 10
($\ell = 3$) and for $|\alphabet| = 15$, we gain a factor of 25 ($\ell
= 3$).  This brings the running time for the largest expressions we
tested down to 22 minutes, in contrast with 9 hours mentioned for
$\learn(\vldb)$ and \learn.  The algorithm with the optimal number of
\BaumWelsh steps in the disambiguation process will be referred to as
$\fixed$.  In particular for small alphabet sizes ($|\alphabet| \le 7$)
we use $\ell = 2$, for large alphabet size $\ell = 3$ ($|\alphabet| >
7$).  We note that the alphabet size can easily be determined from the
sample.

We should also note that Experience with Hidden Markov Model learning
in bio-informatics~\cite{Finn06} suggests that both the running time
and the maximum number of symbol occurrences that can be handled can
be significantly improved by moving to an industrial-strength
$\BaumWelsh$ implementation. Our focus for the rest of the section
will therefore be on the precision of $\learn$.

\begin{table}[tbh]
  \centering
  \begin{tabular}{r|r|r|r}
    $\ell$ & rate $|\alphabet| = 5$ & rate $|\alphabet| = 10$ & rate $|\alphabet| = 15$ \\ \hline
    1 &       95 \%     & 80 \%          & 40 \% \\
    2 & \textbf{100 \%} & 75 \%          & 50 \% \\
    3 &       95 \%     & \textbf{84 \%} & \textbf{60 \%} \\
    4 &       95 \%     & 77 \%          & 50 \% \\ \hline
    $\infty$ & 98 \%    & 75 \%          & 50 \%
  \end{tabular}
  \caption{Success rate for a limited number of \BaumWelsh iterations in the
    disambiguation procedure, $\ell = \infty$ corresponds to \learn, for $\ell = 1,\dots,4$ correspond to $\fixed$.}
  \label{tab:lim-disamb}
\end{table}

\subsection{Real-world target expressions and real-world samples}
\label{sec:real-world-target}

We want to test how $\learn$ performs on real-world data. Since the
number of publicly available XML corpora with valid schemas is rather
limited, we have used as target expressions the $49$ content models
occurring in the \xsd for XML Schema Definitions~\cite{XSDS01} and
have drawn multiset samples for these expressions from a large corpus
of real-world \xsds harvested from the Cover Pages~\cite{Cover03}. In
other words, the goal of our first experiment is to derive, from a
corpus of \xsd definitions, the regular expression content models in
the schema for XML Schema Definitions\footnote{This corpus was also
  used in \cite{Bex07} for XSD inference.}.  As it turns out, the \xsd
regular expressions are all single occurrence regular expressions.

The $\learn(\vldb)$ algorithm infers all these expressions correctly,
showing that it is conservative with respect to $k$ since, as
mentioned above, the algorithm considers $k$ values ranging from 1 to
4.  In this setting, \learn performs not as well, deriving only 73 \%
of the regular expressions correctly.  We note that for each
expression that was not derived exactly, always an expression was
obtained describing the input sample and which in addition is more
specific than the target expression.  \learn therefore seems to favor
more specific regular expressions, based on the available examples.


\subsection{Synthetic target expressions}
\label{sec:synth-target}

Although the successful inference of the real-world expressions in
Section~\ref{sec:real-world-target} suggests that $\learn$ is
applicable in real-world scenarios, we further test its behavior on a
sizable and diverse set of regular expressions.  Due to the lack of
real-world data, we have developed a synthetic regular expression
generator that is parameterized for flexibility.

\myparagraph{Synthetic expression generation} In particular, the
occurrence of the regular expression operators concatenation,
disjunction ($+$), zero-or-one ($?$), zero-or-more ($^*$), and
one-or-more ($^+$) in the generated expressions is determined by a
user-defined probability distribution.  We found that typical values
yielding realistic expressions are $1/10$ for the unary operators and
$7/20$ for others.  The alphabet can be specified, as well as the
number of times that each individual symbol should occur.  The maximum
of these numbers determines the value $k$ of the generated \kore.

To ensure the validity of our experiments, we want to generate a wide
range of different expressions. To this end, we measure how much the
language of a generated expression overlaps with $\Sigma^*$. The
larger the overlap, the greater its language size as defined in
Section~\ref{sec:language-size-select-best-cand}.

To ensure that the generated expressions do not impede readability by
containing redundant subexpressions (as in e.g., $(a^{+})^{+}$), the
final step of our generator is to syntactically simplify the generated
expressions using the following straightforward equivalences:
\begin{eqnarray*}
  r^{*}                     & \rightarrow & r^{+}? \\
  r??                      & \rightarrow & r? \\
  (r^{+})^{+}                & \rightarrow & r^{+} \\
  (r?)^{+}                  & \rightarrow & r^{+}? \\
  (r_1 \cdot r_2) \cdot r_3 & \rightarrow & r_1 \cdot (r_2 \cdot r_3) \\
  r_1 \cdot (r_2 \cdot r_3) & \rightarrow & r_1 \cdot r_2 \cdot r_3 \\
  (r_1? \cdot r_2?)?        & \rightarrow & r_1? \cdot r_2? \\
  (r_1 + r_2) + r_3         & \rightarrow & r_1 + (r_2 + r_3) \\
  r_1 + (r_2 + r_3)         & \rightarrow & r_1 + r_2 + r_3 \\
  (r_1 + r_2^{+})^{+}        & \rightarrow & (r_1 + r_2)^{+} \\
  (r_1^{+} + r_2^{+})        & \rightarrow & (r_1 + r_2)^{+} \\
  r_1 + r_2? & \rightarrow  & (r_1 + r_2)?
\end{eqnarray*}
Of course, the resulting expression is rejected if it is
non-deterministic.

To obtain a diverse target set, we synthesized expressions with
alphabet size $5$ ($45$ expressions), $10$ ($45$ expressions), and
$15$ ($10$ expressions) with a variety of symbol occurrences ($k = 1,
2, 3$). For each of the alphabet sizes, the expressions were selected
to cover language size ranging from $0$ to $1$.  All in all, this
yielded a set of $100$ deterministic target expressions. A snapshot is
given in Figure~\ref{fig:snap}.

\begin{figure*}
\scriptsize
\centering
\begin{tabular}{l}
    $((d e b a b) + c)^{*} a$    \\
    $((((c + b) b) + a) c a) + e + d$    \\
    $(((e a)^{*} d b) + b + a + c)^{+}$    \\
    $((b^{+} + c + e + d) a a b)^{+}$    \\
    $((((e a b h) + d + j + c + b)^{+} f) + a + g + i)?$    \\
    $((((a a) + e)^{+} + c) b) + b + d$   \\
    $((((d + a)^{*} e a b c b) + c) a)?$   \\
    $((((a c) + b + d) e a b) + c)^{*}$   \\
    $(((((b a b) + c)^{+} + e)? a) + d)^{+}$   \\
    $((((e c b)^{+} a) + b)^{+} + d + a)?$    \\
    $((b a g b f e i d) + c + a + j + h)^{*}$    \\
    $((g d a b) + a + i + c + j + e + f)^{+} h b$    \\
    $((h^{*} c d f a) + j + e + g + b + i)^{*} a b$    \\
    $((g + b + e + f + i + d)^{*} a b a) + h + j + c$    \\
    $((((h + b + c + j + f)^{+} + e)? a a i d b) + g)?$ \\
  \end{tabular}
  \begin{tabular}{l}
    $(((((d b e)^{*} c f) + j) h a c) + b + i)^{*} g a d$    \\
    $(((((i h a a j) + d)^{+} + g) b) + e + b + f + c)^{+}$    \\
    $(((e c g e c d) + b + d + a + j + f)^{*} i h a b a)^{*}$    \\
    $(l + c + d + m + n)^{*} a o j a h b e g c b f i d k e$    \\
    $(((c + b) a b) + d + i + a)^{+} + j + g + f + e + h$    \\
    $(((a? c l f h a b g d) + b + n + o) i e d j c e m)^{*} k$    \\
    $((a + k + f + c + m + e)^{+} b d i e c l b o n j g d a)^{*} h$    \\
    $(((k? j g h a d f c e l i f c j b h o m) +$ \\
    \hspace{2cm}$b + g + a + e + i + n)^{+} + d)?$    \\
    $(((a e d o a d e n h d b c i) + h + k + m + j + g + b)^{*}$ \\
    \hspace{4.5cm}$f c c g e l b i f j a)$    \\
    $((a^{+} + f + d + o + g + n + h + c + b + j + i + e)$\\
    \hspace{4.7cm}$k e a c d l b m)$    \\
    $(((k + f + o + a + j)? e d h l d f h n g i c j m a b)? c i e)^{*}
    b g$    \\
    $((((a? d)^{+} b a) + h + g + e + c)^{+} + j + i + b)? f$    \\
  \end{tabular}


\caption{\label{fig:snap} A snapshot of the 100 generated expressions.}
\end{figure*}

\myparagraph{Synthetic sample generation} For each of those $100$
target expressions, we generated synthetic samples by transforming the
target expressions into stochastic processes that perform random walks
on the automata representing the expressions (cf.
Section~\ref{sec:mach-learn-appr}). The probability distributions of
these processes are derived from the structure of the originating
expression.  In particular, each operand in a disjunction is equally
likely and the probability to have zero or one occurrences for the
zero-or-one operator $?$ is $1/2$ for each option. The probability to
have $n$ repetitions in a one-or-more or zero-or-more operator ($^*$
and $^+$) is determined by the probability that we choose to continue
looping ($2/3$) or choose to leave the loop ($1/3$).  The latter
values are based on observations of real-world corpora.
Figure~\ref{fig:graph-rewriting} illustrates how we construct the
desired stochastic process from a regular expression $r$: starting
from the following initial graph,
\begin{center}
  \scriptsize
  \begin{tikzpicture}[>=stealth]
    \tikzstyle{every path}+=[thick]
    \tikzstyle{init}=[circle,draw,inner sep=2pt]
    \tikzstyle{mystate}=[transform shape,rectangle,thick,draw,minimum height=3.5ex]
    \tikzstyle{final}=[circle,draw,double,inner sep=2pt,thick]

    \node[init] (init)  at (0,0)   {\phantom{$r$} };
    \node[mystate] (r)     at (1,0)   {$r$};
    \node[final] (final) at (2,0)   {\phantom{$r$} };


    \draw[->] (init) -- node[above] {\scriptsize $1$} (r);
    \draw[->] (r) -- node[above] {\scriptsize $1$} (final);    
  \end{tikzpicture}
\end{center}  
we continue applying the rewrite rules shown until each internal node
is an individual alphabet symbol.

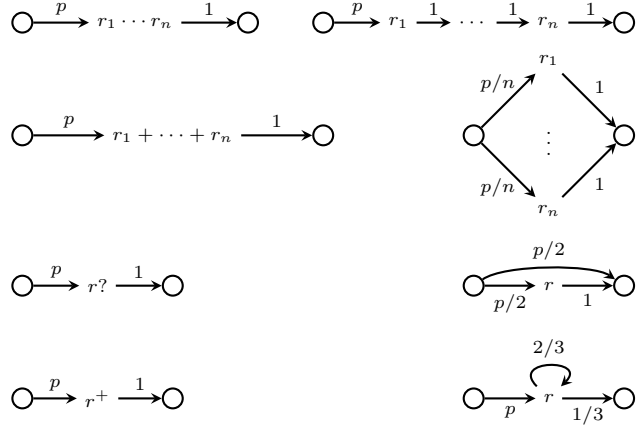
\begin{figure}[t]
  \centering
  \scriptsize
  \begin{tikzpicture}[>=stealth]
    \tikzstyle{every path}+=[thick]
    \tikzstyle{init}=[circle,draw,thick]
    \tikzstyle{final}=[circle,draw,thick]
    \def\concaty{5}
    \def\optionaly{1.5}
    \def\plusy{0}

    \node[init] (initC)   at (0,\concaty)   {\ };
    \node (concat)        at (1.5,\concaty) {$r_1 \cdots r_n$};
    \node[final] (finalC) at (3,\concaty)   {\ };
    \draw[->] (initC)  -- node[above] {$p$} (concat);
    \draw[->] (concat) -- node[above] {$1$} (finalC);

    \node[init] (initCR)   at (4,\concaty) {\ };
    \node (firstCR)        at (5,\concaty) {$r_1$};
    \node (contC)          at (6,\concaty) {$\cdots$};
    \node (lastCR)         at (7,\concaty) {$r_n$};
    \node[final] (finalCR) at (8,\concaty) {\ };
    \draw[->] (initCR)  -- node[above] {$p$} (firstCR);
    \draw[->] (firstCR) -- node[above] {$1$} (contC);
    \draw[->] (contC)   -- node[above] {$1$} (lastCR);
    \draw[->] (lastCR)  -- node[above] {$1$} (finalCR);

    \node[init] (initD)   at (0,3.5) {\ };
    \node (disj)          at (2,3.5) {$r_1 + \dots + r_n$};
    \node[final] (finalD) at (4,3.5) {\ };
    \draw[->] (initD) -- node[above] {$p$} (disj);
    \draw[->] (disj)  -- node[above] {$1$} (finalD);

    \node[init] (initDR)   at (6,3.5) {\ };
    \node (firstDR)        at (7,4.5) {$r_1$};
    \node (cont)           at (7,3.5) {$\vdots$};
    \node (lastDR)         at (7,2.5) {$r_n$};
    \node[final] (finalDR) at (8,3.5) {\ };
    \draw[->] (initDR)  -- node[above] {$p/n\quad$} (firstDR);
    \draw[->] (initDR)  -- node[below] {$p/n\quad$} (lastDR);
    \draw[->] (firstDR) -- node[above] {$\quad 1$} (finalDR);
    \draw[->] (lastDR)  -- node[below] {$\quad 1$} (finalDR);

    \node[init] (initP)   at (0,\plusy) {\ };
    \node (plus)          at (1,\plusy) {$r^{+}$};
    \node[final] (finalP) at (2,\plusy) {\ };
    \draw[->] (initP) -- node[above] {$p$} (plus);
    \draw[->] (plus)  -- node[above] {$1$} (finalP);

    \node[init] (initPR)   at (6,\plusy) {\ };
    \node (plusR)          at (7,\plusy) {$r$};
    \node[final] (finalPR) at (8,\plusy) {\ };
    \draw[->] (initPR) -- node[below] {$p$} (plusR);
    \draw[->] (plusR) .. node[above] {$2/3$}
      controls +(135:0.75) and +(45:0.75) .. (plusR);
    \draw[->] (plusR)  -- node[below] {$1/3$} (finalPR);
    
    \node[init] (initO)   at (0,\optionaly) {\ };
    \node (optional)      at (1,\optionaly) {$r?$};
    \node[final] (finalO) at (2,\optionaly) {\ };
    \draw[->] (initO) -- node[above] {$p$} (optional);
    \draw[->] (optional)  -- node[above] {$1$} (finalO);

    \node[init] (initOR)   at (6,\optionaly) {\ };
    \node (optionalR)      at (7,\optionaly) {$r$};
    \node[final] (finalOR) at (8,\optionaly) {\ };
    \draw[->] (initOR) -- node[below] {$p/2$} (optionalR);
    \draw[->] (initOR) .. node[above] {$p/2$}
      controls +(35:0.5) and +(145:0.5) .. (finalOR);
    \draw[->] (optionalR)  -- node[below] {$1$} (finalOR);
    
  \end{tikzpicture}
  \caption{From a regular expression to a probabilistic automaton.}
  \label{fig:graph-rewriting}
\end{figure}

\myparagraph{Experiments on covering samples} Our first experiment is
designed to test how $\learn$ performs on samples that are at least
large enough to \emph{cover} the target regular expression, in the
following sense.

\begin{definition}
  A sample $S$ \emph{covers} a deterministic automaton $G$ if for
  every edge $(s,t)$ in $G$ there is a word $w \in S$ whose unique
  accepting run in $G$ traverses $(s,t)$. Such a word $w$ is called a
  \emph{witness} for $(s,t)$. A sample $S$ \emph{covers} a
  deterministic regular expression $r$ if it covers the automaton
  obtained from $S$ using the Glushkov construction for translating
  regular expressions into automata as defined in
  Definition~\ref{def:glushkov-translation}.
  \label{def:completeness}
\end{definition}

Intuitively, if a sample does not cover a target regular expression
$r$ then there will be parts of $r$ that cannot be learned from $S$.
In this sense, covering samples are the minimal samples necessary to
learn $r$. Note that such samples are far from ``complete'' or
``characteristic'' in the sense of the theoretical framework of
learning in the limit, as some characteristic samples are bound to be
of size exponential in the size of $r$ by
Theorem~\ref{THM:KORE-EXP-DATA}, while samples of size at most
quadratic in $r$ suffice to cover $r$. Indeed, the Glushkov
construction always yields an automaton whose number of states is
bounded by the size of $r$.  Therefore, this automaton can have at
most $\car{r}^2$ edges, and hence $\car{r}^2$ witness words suffice to
cover $r$.

Table~\ref{tab:suff-data-alpha} shows how \learn performs on covering
samples, broken up by alphabet size of the target expressions.  The
size of the sample used is depicted as well.  The table demonstrates a
remarkable precision. Out of a total of 100 expressions, 82 are
derived exactly for \learn.  Although $\learn(\vldb)$ outperforms
\learn with a success rate of 87 \%, overall $\fixed$ performs best with
89 \%.  The performance decreases with the alphabet size of the target
expressions: this is to be expected since the inference task's
complexity increases.  It should be emphasized that even if $\fixed$
does not derive the target expression exactly, it always yields an
over-approximation, i.e., its language is a superset of the target
language.

Table~\ref{tab:suff-data-size} shows an alternative view on the
results.  It shows the success rate as a function of the target
expression's language size, grouped in intervals.  In particular, it
demonstrates that the method works well for all language sizes.

A final perspective is offered in
Table~\ref{tab:suff-data-states-symbol} which shows the success rate
in function of the average states per symbol $\kappa$ for an
expression.  The latter quantity is defined as the length of the
regular expression excluding operators, divided by the alphabet
size. For instance, for the expression $a (a + b)^{+} c a b$, $\kappa
= 6/3$ since its length excluding operators is 6 and $|\Sigma| = 3$.
It is clear that the learning task is harder for increasing values of
$\kappa$.  To verify the latter, a few extra expressions with large
$\kappa$ values were added to the target expressions.  For the
algorithm $\fixed$ the success rate is quite high for target expressions
with a large value of $\kappa$.  Conversely, $\learn(\vldb)$ yields
better results for $\kappa < 1.6$, while its success rate drops to
around 50 \% for larger values of $\kappa$.  This illustrates that
neither $\learn(\vldb)$ nor $\fixed$ outperforms the other in all
situations.

\begin{table}[tbh]
  \centering
  \begin{tabular}{r|r|r|r|r|r}
    $|\Sigma|$ & \#regex & $\learn(\vldb)$ & \learn & $\fixed$ & $|\Corpus|$ \\ \hline
     5         &  45     &  86 \%     &  97 \%  & 100 \%    &  300  \\
    10         &  45     &  93 \%     &  75 \%  &  84 \%    & 1000  \\
    15         &  10     &  70 \%     &  50 \%  &  60 \%    & 1500  \\ \hline
    total      & 100     &  87 \%     &  82 \%  & \textbf{89 \%} &
  \end{tabular}
  \caption{Success rate on the target regular expressions and
    the sample size used per alphabet size for the various algorithms.}
  \label{tab:suff-data-alpha}
\end{table}

\begin{table}[tbh]
  \centering
  \begin{tabular}{r|r|r|r|r}
    $\Size(r)$   & \#regex &  $\learn(\vldb)$ & \learn &  $\fixed$ \\ \hline
    $[0.0,0.2[$  &    24   & 100 \%          &  87 \% & 96 \% \\
    $[0.2,0.4[$  &    22   &  82 \%          &  91 \% & 91 \% \\
    $[0.4,0.6[$  &    20   &  90 \%          &  75 \% & 85 \% \\
    $[0.6,0.8[$  &    22   &  95 \%          &  72 \% & 83 \% \\
    $[0.8,1.0]$  &    12   &  83 \%          &  78 \% & 78 \%
  \end{tabular}
  \caption{Success rate on the target regular expressions,
    grouped by language size.}
  \label{tab:suff-data-size}
\end{table}

\begin{table}[tbh]
  \centering
  \begin{tabular}{r|r|r|r|r}
    $\kappa$     & \#regex  & $\learn(\vldb)$ & \learn & $\fixed$ \\ \hline
    $[1.2,1.4[$  &    29    &  96 \%          &   72 \% &  83 \% \\
    $[1.4,1.6[$  &    37    & 100 \%          &   89 \% &  89 \% \\
    $[1.6,1.8[$  &    24    &  91 \%          &   92 \% & 100 \% \\
    $[1.8,2.0[$  &    11    &  54 \%          &   91 \% & 100 \% \\
    $[2.0,2.5[$  &    12    &  41 \%          &   50 \% &  50 \% \\
    $[2.5,3.0]$  &    18    &  66 \%          &   71 \% &  78 \%
  \end{tabular}
  \caption{Success rate on the target regular expressions,
    grouped by $\kappa$, the average number of states per symbol.}
  \label{tab:suff-data-states-symbol}
\end{table}

It is also interesting to note that $\learn$ successfully derived the
regular expression $r_1 = (a_1 a_2 + a_3 + \dots + a_n)^+$ of
Theorem~\ref{THM:KORE-EXP-DATA} for $n = 8$, $n=10$, and $n = 12$ from
covering samples of size $500$, $800$, and $1100$, respectively. This
is quite surprising considering that the characteristic samples for
these expressions was proven to be of size at least $(n-2)!$, i.e.,
$720$, $40 320$, and $3 628 800$ respectively. The regular expression
$r_2 = (\alphabet \setminus a_1)^+ a_1 (\alphabet \setminus a_1)^+$,
in contrast, was not derivable by $\learn$ from small samples.

\myparagraph{Experiments on partially covering samples} Unfortunately,
samples to learn regular expressions from are often smaller than one
would prefer.  In an extreme, but not uncommon case, the sample does
not even entirely cover the target expression. In this section we
therefore test how $\learn$ performs on such samples.

\begin{definition}
  The \emph{coverage} of a target regular expression $r$ by a sample
  $S$ is defined as the fraction of transitions in the corresponding
  Glushkov automaton for $r$ that have at least one witness in $S$.
  \label{def:coverage}
\end{definition}

Note that to successfully learn $r$ from a partially covering sample,
$\learn$ needs to ``guess'' the edges for which there is no witness in
$S$.  This guessing capability is built into $\learn(\vldb)$ and
\learn in the form of repair rules~\cite{Bex06,vldbj}. Our experiments
show that for target expressions with alphabet size $|\Sigma| = 10$,
this is highly effective for $\learn(\vldb)$: even at a coverage of
$70 \%$, half the target expressions can still be learned correctly as
Table~\ref{tab:cov10} shows.  The algorithm \learn is performing very
poorly in this setting, being only successful occasionally for
coverages close to 100 \%.  $\fixed$ performs better, although not as
well as $\learn(\vldb)$.  This again illustrates that both
algorithms have their merits.

\begin{table}[tbh]
  \centering
  \begin{tabular}{r|r|r|r}
    coverage & $\learn(\vldb)$ & \learn & $\fixed$    \\ \hline
    1.0      &   100 \%  & 80 \% & 80 \% \\
    0.9      &    64 \%  & 20 \% & 60 \% \\
    0.8      &    60 \%  &  0 \% & 40 \% \\
    0.7      &    52 \%  &  0 \% &  0 \% \\
    0.6      &     0 \%  &  0 \% &  0 \%
  \end{tabular}
  \caption{Success rate for 25 target expressions for $|\Sigma| = 10$
    for samples that provide partial coverage of the target expressions.}
  \label{tab:cov10}
\end{table}

We also experimented with target expressions with alphabet size
$|\Sigma| = 5$.  In this case, the results were not very promising for
$\learn(\vldb)$, but as Table~\ref{tab:cov5} illustrates, \learn and
$\fixed$ performs better, on par with the target expressions for
$|\Sigma| = 10$ in the case of $\fixed$.  This is interesting since the
absolute amount of information missing for smaller regular expressions
is larger than in the case of larger expressions.

\begin{table}[tbh]
  \centering
  \begin{tabular}{r|r|r|r}
    coverage & $\learn(\vldb)$ & \learn & $\fixed$ \\ \hline
    1.0      &   100 \% & 100 \% & 100 \% \\
    0.9      &    25 \% &  75 \% &  66 \% \\
    0.8      &    16 \% &  75 \% &  41 \% \\
    0.7      &     8 \% &  25 \% &  33 \% \\
    0.6      &     8 \% &  25 \% &  17 \% \\
    0.5      &     0 \% &   8 \% &  17 \%
  \end{tabular}
  \caption{Success rate for 12 target expressions for $|\Sigma| = 5$
    with partially covering samples.}
  \label{tab:cov5}
\end{table}



\section{Conclusions}

We presented the algorithm \learn for inferring a deterministic
regular expression from a sample of words.  Motivated by regular
expressions occurring in practice, we use a novel measure based on the
number $k$ of occurrences of the same alphabet symbol and derive
expressions for increasing values of $k$.  We demonstrated the
remarkable effectiveness of \learn on a large corpus of real-world and
synthetic regular expressions of different densities.

Our experiments show that $\learn(\vldb)$ performs better than \learn
for target expressions with a $\kappa < 1.6$ and vice versa for larger
values of $\kappa$.  For partially covering samples, $\learn(\vldb)$
is more robust than \learn.  As $\kappa$ values and sample coverage
are not known in advance, it makes sense to run both algorithms and
select the smallest expression or the one with the smallest language
size, depending on the application at hand.

Some questions need further attention. First, in our experiments,
\learn always derived the correct expression or a super-approximation
of the target expression.  It remains to investigate for which kind of
input samples this behavior can be formally proved.  Second, it would
also be interesting to characterize precisely which classes of
expressions can be learned with our method.  Although the parameter
$\kappa$ explains this to some extend, we probably need more fine
grained measures.  A last and obvious goal for future work is to speed
up the inference of the probabilistic automaton which forms the
bottleneck of the proposed algorithm. A possibility is to use an
industrial strength implementation of the Baum-Welsh algorithm as in
\cite{Finn06} rather than a straightforward one or to explore
different methods for learning probabilistic automata.

Although \learn can be directly plugged into the XSD inference engine
$i$XSD of \cite{Bex07}, it would be interesting to investigate how to
extend these techniques to the more robust class of Relax NG
schemas~\cite{RELAXNG01}.


\bibliographystyle{acmtrans}
\bibliography{cure,refs,tools}


\begin{received} 
Received Month Year; revised Month Year; accepted Month Year 
\end{received} 

\end{document}